\definecolor{dark blue}{rgb}{0.121,0.47,0.705} 
\newcommand{\bl}{\color{dark blue}}
\definecolor{dark red}{rgb}{0.89,0.102,0.109}
\newcommand{\re}{\color{dark red}}
\definecolor{dark orange}{rgb}{1,0.498,0}
\newcommand{\og}{\color{dark orange}}
\definecolor{dark green}{rgb}{0.2,0.627,0.172}
\newcommand{\gr}{\color{dark green}}
\definecolor{dark cyan}{rgb}{0.106,0.62,0.467}
\newcommand{\cy}{\color{dark cyan}}
\definecolor{dark brown}{rgb}{0.651,0.337,0.157}
\newcommand{\br}{\color{dark brown}}
\definecolor{dark purple}{rgb}{0.415,0.239,0.603}
\newcommand{\pu}{\color{dark purple}}
\definecolor{defcolor}{HTML}{54278f}
\DeclarePairedDelimiter\set{\lbrace}{\rbrace}
\DeclarePairedDelimiter\abs{\lvert}{\rvert}
\DeclarePairedDelimiter\croc{\langle}{\rangle}
\def\Oh{\ensuremath{\mathcal{O}}}
\def\bT{\ensuremath{\mathbb{T}}\xspace}
\def\bQ{\ensuremath{\mathbb{Q}}\xspace}
\def\bR{\ensuremath{\mathbb{R}}\xspace}
\def\bZ{\ensuremath{\mathbb{Z}}\xspace}
\def\cC{\ensuremath{\mathcal{C}}\xspace}
\def\cD{\ensuremath{\mathcal{D}}\xspace}
\def\cT{\ensuremath{\mathcal{T}}\xspace}
\def\cL{\ensuremath{\mathcal{L}}\xspace}
\def\R{\ensuremath{\mathcal{R}}\xspace}
\def\blG{\ensuremath{{\bl L_1(G)}}\xspace}
\def\reG{\ensuremath{{\re L_2(G)}}\xspace}
\def\para{\!\parallel\!}
\def\hati{\hat{\imath}} 
\def\upward{orbit-advancing\xspace}
\def\enclosing{orbit-enclosing\xspace}
\newtheorem{observation}[theorem]{Observation}
\newtheorem{proposition}[theorem]{Proposition}
\newcommand{\etal}{{et~al.}\xspace}
\crefname{numclaim}{Claim}{Claims}
\Crefname{numclaim}{Claim}{Claims}
\crefname{observation}{Observation}{Observations}
\Crefname{observation}{Observation}{Observations}
\crefname{proposition}{Proposition}{Propositions}
\Crefname{proposition}{Proposition}{Propositions}
\crefname{corollary}{Corollary}{Corollaries}
\Crefname{corollary}{Corollary}{Corollaries}
\begin{document}

\doi{xxx}
\Issue{0}{0}{0}{0}{0} 
\title{Rectangular Duals on the Cylinder and the Torus}
\HeadingTitle{Rectangular Duals on the Cylinder and the Torus}
\HeadingAuthor{T. Biedl, P. Kindermann, and J. Klawitter}
\authorOrcid[first]{Therese Biedl}{}{0000-0002-9003-3783}
\affiliation[first]{University of Waterloo, Canada}
\authorOrcid[second]{Philipp Kindermann}{}{0000-0002-9003-3783}
\affiliation[second]{Universität Trier, Germany}
\authorOrcid[third]{Jonathan~Klawitter}{}{0000-0001-8917-5269} 
\affiliation[third]{University of Auckland, Aotearoa New Zealand}
\Ack{Jonathan Klawitter was supported by the 
Beyond Prediction Data Science Research Programme (MBIE grant UOAX1932).
Therese Biedl is supported by NSERC, FRN RGPIN-2020-03958.
}
\maketitle

\pdfbookmark[1]{Abstract}{Abstract}
\begin{abstract}
A rectangular dual of a plane graph $G$ is a contact representation of~$G$ by interior-disjoint rectangles
such that (i)~no four rectangles share a point, and (ii)~the union of all rectangles is a rectangle.
In this paper, we study rectangular duals of graphs that are embedded in surfaces other than the plane.
In particular, we fully characterize when a graph embedded on a cylinder admits a cylindrical rectangular dual.
For graphs embedded on the flat torus, we can test whether the graph has a toroidal rectangular dual 
if we are additionally given a \textit{regular edge labeling}, 
i.e.\ a combinatorial description of rectangle adjacencies.
Furthermore we can test whether there exists a toroidal rectangular dual that respects the embedding 
and that resides on a flat torus for which the sides are axis-aligned. 
Testing and constructing the rectangular dual, if applicable, can be done~efficiently. 
\end{abstract}

\begin{figure}[h]
  \centering
  \begin{subfigure}[t]{0.4 \linewidth}
	\centering
  	\includegraphics[page=1]{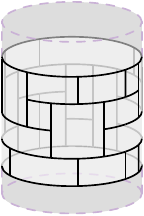}
  \end{subfigure}
  \begin{subfigure}[t]{0.4 \linewidth}
	\centering
  	\includegraphics[page=2]{cylinderTorus}
  \end{subfigure}
  \caption{A rectangular dual on the cylinder and on the torus.}
  \label{fig:teaser}
\end{figure}

\section{Introduction} 
\label{sec:intro}
In a \emph{contact representation} $\R$ of a graph $G$ each vertex $w$ of~$G$ is mapped to an object $\R(w)$
such that two vertices $u$ and $v$ are adjacent in $G$ if and only if $\R(u)$ and $\R(v)$ touch,
that is, they intersect with disjoint interiors; all other objects are disjoint.
A famous example are Koebe's coin graphs~\cite{Koe36}, which can represent all planar graphs. 
Further examples include contact graphs of triangles~\cite{dFdMR94,SHRMH18}, squares and rectangles~\cite{BGPV08,KNU15,dLDEJ17},
line segments and arcs~\cite{Hli01,AE0KPSU15}, and unit disks~\cite{HK01,Epp17}.
A \emph{rectangular dual} of a plane graph $G$ is a commonly studied type of contact representation
where each vertex~$v$ is mapped to a non-empty, axis-aligned rectangle satisfying two conditions (see \cref{fig:recduals}):   
\begin{enumerate}[(i)]
  \item No four rectangles share a point;
  \item the union of the rectangles is a rectangle. 
\end{enumerate}
It has been characterized repeatedly when a plane graph~$G$ admits a rectangular dual~\cite{He93,LL90,KK85,Tho86,Ung53,KS24},
and how to test these conditions and construct the rectangular dual in linear time~\cite{BS87,KH97,BD16,KS24}.
Rectangular duals have originally been studied due to their relation to floor plans in architecture~\cite{Ste73}
and VLSI design~\cite{LL84,YS95}, and to rectangular cartograms from cartography~\cite{GS69,NK16}.
Here we are interested in generalizing rectangular duals from the plane to other surfaces,
specifically, to the cylinder and the torus as shown in~\cref{fig:teaser}. 

A common way to represent the surface of the cylinder [torus] is with a parallelogram~$Q$,
called the \emph{fundamental} or \emph{canonical polygon}, 
where one pair [both pairs] of opposite sides of~$Q$ are identified;
imagine cutting the cylinder [torus] along a path [two cycles] and then mapping the surface onto~$Q$.
We also speak of~$Q$ as the \emph{flat cylinder} or \emph{flat torus}, accordingly.
If~$Q$ has axis-parallel sides, we call it the \emph{rectangular flat cylinder/torus};
otherwise, we call it \emph{slanted}.
We assume that we have a standing cylinder and
thus call the identified sides of a flat cylinder~$Q$ the left and right side.

A \emph{cylindrical} [\emph{toroidal}$\,$] graph~$G$ is a graph embedded on the cylinder [torus] such that no two edges cross. 
Note that while a planar triangulated graph has a unique embedding in the plane (up to the choice of the outer face),
this is not the case for a toroidal triangulated graph:
Given one embedding, we can obtain another one by, roughly speaking,
cutting the torus along a cycle through the hole (or around the body),
twisting it and glueing it back together; this operation is known as a \emph{Dehn twist}.
Hence, when working with a toroidal graph we may also need a combinatorial description 
of how the graph is embedded on the flat torus (defined in \cref{sec:prelim}).
For rectangular duals of cylindrical and toroidal graphs,
we consider representations on a fundamental polygon~$Q$
that maintain the characteristic properties of planar rectangular duals.
In particular, we assume that no four rectangles share a point.
Similar to how a planar rectangular dual is bounded by four axis-aligned line segments,
we want both the top and bottom of a cylindrical rectangular dual to be bounded by a horizontal line.
On the torus on the other hand, we have no border and thus require the rectangular dual to fully cover~$Q$.
See \cref{fig:teaser,fig:recduals,fig:otherrecduals} for examples and~\cref{sec:prelim} for precise definitions.
Since rectangles may extend across an (identified) side of the fundamental polygon,
we can, unlike in the case of planar rectangular duals, realize
\emph{separating triangles} (i.e.\ a triangle that is not the boundary of a face), or also loops or parallel edges. 

\begin{figure}[t]
  \captionsetup[subfigure]{justification=centering}
  \centering
  \begin{subfigure}[t]{0.31 \linewidth}
	\centering
  	\includegraphics[page=1]{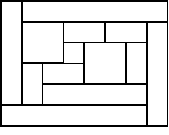}
 	\caption{Plane rectangular dual}
	\label{fig:recduals:plane}
  \end{subfigure}
  \hfill
  \begin{subfigure}[t]{0.31 \linewidth}
	\centering
    \includegraphics[page=2]{recDuals}
	\caption{Cylindrical \\ rectangular dual}
	\label{fig:recduals:cylinder}
  \end{subfigure}
  \hfill
  \begin{subfigure}[t]{0.31 \linewidth}
	\centering
  	\includegraphics[page=3]{recDuals}
	\caption{Toroidal \\ rectangular dual}
	\label{fig:recduals:torus}
  \end{subfigure}
  \caption{Rectangular duals on different surfaces.}
  \label{fig:recduals}
\end{figure}

An important notion when working with a rectangular dual is a \emph{regular edge labeling (REL)},
which, roughly speaking, is a 2-coloring and orientation of the edges that combinatorially
describe how rectangles touch (see \cref{sec:prelim} for details). 
Introduced by He~\cite{He93}, a REL can be found (if one exists) for planar graphs 
and used to construct rectangular duals in linear time~\cite{KH97};
any rectangular dual naturally gives rise to a REL.
In addition, RELs are fundamental tool in a variety of related applications, such as 
area-universality, constrained drawings, 
simultaneous and partial representation extensions, and morphing~\cite{KH97,Fus09,EMSV12,CFKKRW22,CKKRW22}.
Bernard, Fusy, and Liang~\cite{BFL24} generalized Schnyder woods and REL as well as other combinatorial structures of planar graphs to Grand Schnyder woods. 
Bonichon and L{\'{e}}v{\^{e}}que~\cite{BL19} studied toroidal RELs,
yet did not explore RELs as a tool to construct toroidal rectangular duals.
This and also the recent work on visibility representations on the torus and Klein bottle~\cite{Bie22}, 
motivated this~work.

\paragraph{Contribution.}
In this paper, we characterize exactly when a cylindrical graph (with fixed embedding)
has a rectangular dual on the flat cylinder. 
For toroidal graphs, a similar characterization remains open, 
but we provide a significant and non-trivial step by arguing when a given REL can be realized as a rectangular dual.   
More precisely, given a toroidal graph~$G$, a REL~$\cL$, and an embedding on a flat torus,
we can decide in linear time whether~$G$ has a rectangular dual~$\R$ on a flat torus that \emph{realizes}~$\cL$, 
i.e.\ the REL that arises from~$\R$ is~$\cL$.

While the flat torus in \cref{fig:otherrecduals} is \textit{slanted} (some of its sides are not axis-parallel),
it would be preferred to have a drawing on the \textit{rectangular flat torus}.
Yet there are toroidal graphs (such as the one in~\cref{fig:otherrecduals}) that have a toroidal rectangular dual, 
but none on the rectangular flat torus.
So we also study the question how to test whether a toroidal graph~$G$ 
has a rectangular dual on a rectangular flat torus.
We show that if given the REL~$\cL$ and an embedding on a rectangular flat torus,
then we can test whether we can find a rectangular dual that respects these input~constraints.

\begin{figure}[tbh]
  \captionsetup[subfigure]{justification=centering}
  \centering
  \begin{subfigure}[t]{0.34	 \linewidth}
	\centering
   	\includegraphics[page=1]{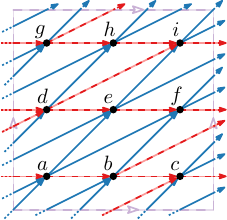}
  \end{subfigure}
  \hfill
  \begin{subfigure}[t]{0.63 \linewidth}
	\centering
   	\includegraphics[page=2]{noOrbitalRELexists}
  \end{subfigure}
  \caption{While this graph does not admit a rectangular dual on a rectangular flat torus,
  with the shown REL it admits one on the slanted flat torus. Red/blue edges are dashed/solid.}
  \label{fig:otherrecduals}
\end{figure}

In terms of presentation, we first study toroidal graphs in \cref{sec:torus}, 
since some results for cylindrical graphs (in \cref{sec:cylinder}) use them as a subroutine.   
Our proofs are constructive, and the corresponding representations can be found in linear or quadratic time.

\paragraph{Related Results.} 
For many graph drawing questions, it has been asked whether the known results for planar
graphs can be transferred to graphs embedded on other surfaces, for example 
for straight-line drawings~\cite{CEGL11,CDF13,DGK11,GL14,CDMB20}, level-planarity~\cite{ADDFPR20},
visibility representations~\cite{Bie22,Dean00,MR98,TT91}, and morphing~\cite{EL23}.
There is little previous work on contact representations and none (to our knowledge) concerning rectangular duals
on the torus or the cylinder~\cite{KP96,MR98}.
The closest related results are those on \emph{tessellation representations} of toroidal graphs~\cite{MR98}.   
Here we assign axis-aligned rectangles~$\R(e)$ to edges $e$ with some restrictions on how these touch; 
however, tessellation representations cannot be used to obtain rectangular duals or vice versa.

\section{Preliminaries} 
\label{sec:prelim}
We start with a little primer on working on the flat torus, in particular, 
the classification of curves in homotopy classes and important basic results.
This is based on Stillwell~\cite{stillwell} and Chambers, Erickson, Lin, and Parsa~\cite{CELP21}.
We then give necessary definitions and notation on graphs, rectangular duals, 
and regular edge labelings, and make some basic observations on cycles in toroidal graphs on the flat~torus.

\subsection{Working with the Flat Torus} 
The (unit-square) flat torus~$\bT$ is the metric space obtained
by identifying opposites sides of the unit square~$[0,1]^2$ in the Eucledian plane via $(x, 0) \sim (x, 1)$ and $(0, y) \sim (1, y)$.
The \emph{universal covering plane (UCP)} is obtained by tiling the plane with infinitely many copies of~$\bT$.
So equivalently, $\bT$ is the quotient space~$\bT = \bR^2 / \bZ^2$.
The \emph{covering map} (or \emph{projection map}) is the function~$\pi \colon \bR^2 \to \bT$ via~$\pi(x, y) = (x \bmod 1, y \bmod 1)$.
We assume the unit-square flat torus~$\bT$ whenever we consider toroidal graph embeddings,
yet for some toroidal rectangular duals, we need a non-square flat torus~$Q$.
For those, UCP and covering map can be defined correspondingly by tiling the plane with copies of~$Q$, and defining the
covering map as the inverse of this operation.
If a toroidal graph is given on the surface of a torus in~$\bR^3$,
then one can compute an embedding on the flat torus in linear time~\cite{LPVV01}.

\pdfbookmark[3]{Curves}{Curves} 
\paragraph{Curves.} 
Throughout, we call a closed, simple, directed curve on a surface~$\Sigma$ (plane/cy\-lin\-der/torus) simply a \emph{curve}.
Two curves~$\cC, \cC'$ on~$\Sigma$ are \emph{homotopic} if one can be continuously deformed into the other.
More formally, interpreting a curve as a map from $[0,1]$ to~$\Sigma$, the curves~$\cC, \cC'$ are \emph{homotopic} 
if there exists a continuous map, a \emph{homotopy}, $f \colon [0,1] \times [0,1] \to \Sigma$ where $f(0,\cdot) = \cC$ and $f(1,\cdot) = \cC'$.
A curve is \emph{contractible} if it is homotopic to a single point and \emph{non-contractible} otherwise.

Any curve~$\cC$ on~$\bT$ can be \emph{lifted} to a curve in the UCP
by continuing on the adjacent copy of~$\bT$ rather than continuing on the other side.
We call this a \emph{covering curve} of~$\cC$ and use $\tilde{\cC}$ to denote it.
There are infinitely many covering curves of~$\cC$, depending on which copy of~$G$ and which point of~$\cC$ we use to start building it.
Sometimes we use the notation $\overline{\cC}$ for one copy of $\cC$ in the UCP, obtained by picking an arbitrary point $p$ on some
covering curve $\tilde{\cC}$, and then walking from point~$p$ along~$\tilde{\cC}$ until we have walked the entire length of $\cC$.
Note that a covering curve $\tilde{\cC}$ of a non-contractible curve $\cC$ goes towards infinity in both direction,
while a copy $\overline{\cC}$ only walks the length of $\cC$ once.

\pdfbookmark[3]{Crossings of Curves}{Crossings} 
\paragraph{Crossings.} 
For a curve $\cC$, let~$\cC[a,b)$ be the part of~$\cC$ between points~$a$ and~$b$, 
including the endpoint if we use a bracket and excluding it if we use a parenthesis.  
For two directed curves~$\cC$ and~$\cC'$ with a maximal common sub-curve~$\cC[p,q]$ (possibly~$p = q$),
we say that~$\cC$ \emph{crosses}~$\cC'$ \emph{at~$\cC[p,q]$} 
if~$\cC$ enters~$\cC'$ from one side at~$p$ and leaves it at~$q$ towards the opposite side;
otherwise the two curves are said to \emph{touch}.
We call such a crossing \emph{left-to-right} if~$\cC$ enters~$p$ left of~$\cC'$ and 
leaves~$q$ right of~$\cC'$; otherwise we call it \emph{right-to-left}. 
Summing over all maximal shared sub-curves, the \emph{algebraic crossing number} of~$\cC$ and~$\cC'$ is
\[ \hati(\cC, \cC')  = \abs{\set{\cC \text{ crosses } \cC' \text{ left-to-right} }} 
					- \abs{\set{\cC \text{ crosses } \cC' \text{ right-to-left} }}\text{.} \]
Correspondingly, we say that~$\cC$ \emph{crosses}~$\cC'$ \emph{algebraically left-to-right [right-to-left]}
if~$\hati(\cC, \cC') > 0$ [$\hati(\cC, \cC') < 0$].
Two homotopic curves $C, C'$ may cross each other, but not algebraically, 
if 
we have $\hati(\cC,\cC') = 0$.

\pdfbookmark[3]{Orbits and Homotopy Classes}{Orbits} 
\paragraph{Orbits and Homotopy Classes.} 
The sides of~$\bT$ are two non-contractible curves~$\bl M$ and~$\re H$ that intersect exactly once. 
The vertical curve~$\bl M$ is called \emph{meridian} or \emph{first orbit} and we assume~$\bl M$ is directed bottom-to-top.
The horizontal curve~$\re H$ is called \emph{horizon} or \emph{second orbit} (or sometimes \emph{latitude})
and we assume~$\re H$ is directed left-to-right.
(For a slanted flat torus, the naming is such that~$\re H$ is not vertical and~$\bl M$ is not horizontal.)

It is known that the pair ${\bl M}$ and ${\re H}$ are the generators of the fundamental group of $\bT$
which is isomorphic to $\mathbb{Z} \times \mathbb{Z}$ and coincide with its homology group.
We do not define these terms precisely, but only express its implications to the extent that we need them.
For a directed curve~$\cC$, define~$m(\cC) := \hati(\cC, {\bl M})$ and~$h(\cC) := \hati({\re H}, \cC)$,
that is, the number of times $\cC$ (algebraically) crosses the meridian left-to-right and the horizon right-to-left (bottom-to-top).
(Note that the order of curve and orbit are \textit{not} the same for both.)
The pair~$(m(\cC), h(\cC))$ is called the \emph{homotopy class} of~$\cC$ and always coprime.

The following observations are well known~\cite{stillwell}.
Orbits~$\bl M$ and~$\re H$ have homotopy class~$(0,1)$ and~$(1,0)$, respectively. 
Contractible curves have homotopy class $(0,0)$.
We call two curves~$\cC,\cC'$ \emph{parallel} if they have the same homotopy class
and \emph{reverse-parallel} if~$(m(\cC), h(\cC)) = {-}(m(\cC'), h(\cC'))$.
Two curves~$\cC$ and~$\cC'$ with~$\hati(\cC, \cC') = 0$ are either parallel, reverse-parallel,
or at least one is contractible.
For two [reverse-]parallel curves~$\cC, \cC'$ and third curve $\cD$,
it holds that $\hati(\cC, \cD) = \hati(\cC', \cD)$ [resp. $\hati(\cC, \cD) = -\hati(\cC', \cD)$].
Furthermore, the algebraic crossing number of two curves~$\cC, \cC'$
can be computed from the homotopy classes alone with 
$$ \hati(\cC, \cC') = m(\cC) h(\cC') - h(\cC)m(\cC')\text{.} $$

Consider a copy $\overline{\cC}$ of a curve~$\cC$ of homotopy class~$(m(\cC), h(\cC))$ in the UCP.
Note that it crosses the meridian algebraically left-to-right exactly $m(\cC)$ times
and the horizon algebraically right-to-left exactly $h(\cC)$ times.
Therefore, $\overline{\cC}$ connects a point $p$ with point $p + (m,h)$; see~\cref{fig:UCP:curved}.
If we stretch $\overline{\cC}$ to a straight line segment, it has slope~$\tfrac{h(\cC)}{m(\cC)}$ (where a line of slope~$\pm \infty$ is vertical).
Hence, all homotopy classes are given by $\bQ \cup \pm \infty$ ($m$ can be 0).
Furthermore, the \emph{standard form} of the covering curve of~$\cC$ is a line of slope~$\tfrac{h(\cC)}{m(\cC)}$ 
that goes through a grid point; see~\cref{fig:UCP:straight}.

\begin{figure}[t]
  \begin{minipage}[t]{0.45\linewidth}
    \centering
    \includegraphics[page=1]{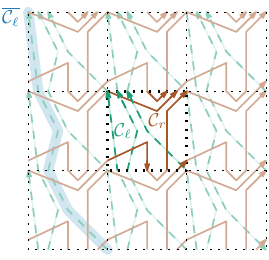}
	\subcaption{$\cy \cC_\ell$, $\br \cC_r$ and one copy $\overline{\cC_\ell}$ in the UCP.}
	\label{fig:UCP:curved}
  \end{minipage}
  \hfill
  \begin{minipage}[t]{0.45\linewidth}
    \centering
    \includegraphics[page=2,scale=0.8]{UCP.pdf}
	\subcaption{$\cy \cC_\ell$ and $\br \cC_r$ in standard form.} 
	\label{fig:UCP:straight}
  \end{minipage}
  \caption{Extract of the UCP shows two curves $\cy \cC_\ell$, $\br \cC_r$ in homotopy class~$\cy (-1,3)$ and~$\br (2,1)$ 
  cross $7 = ({\cy {-}1}) \cdot {\br 1} - {\cy 3} \cdot {\br 2}$ times.}
  \label{fig:UCP}
\end{figure}

\subsection{Cylindrical and Toroidal Graphs} 
Throughout, $G$ denotes a connected (not necessarily simple) graph with vertices~$V(G)$ and with edges~$E(G)$. 
Graph~$G$ comes with a fixed, crossing-free embedding on a surface~$\Sigma$
described by giving the \emph{rotation system} (a cyclic order of edges around each vertex) that determines the faces.
An \emph{outer face} is an unbounded face on~$\Sigma$ (a cylindrical graph has two while a toroidal graphs has none); 
all other faces are called \emph{inner faces}.
We assume that the one outer face of a plane graph and the two outer faces of a cylindrical graph are part of the input.
For a toroidal graph~$G$ embedded on the unit-square flat torus $\bT$,
we additionally get the intersections of $E(G)$ with meridian ${\bl M}$ and horizon ${\re H})$.
(After local deformations, we may assume that the unique crossing of $\bl M$ and $\re H$ lies in a face of $G$, 
and that ${\bl M}, \re H$ do not go through vertices or along edges.)    
We can store these crossings via the \emph{planarization} $G^\times\!$ 
obtained by subdividing an edge $e$ with a dummy-vertex whenever $e$ is crossed by an orbit,
adding a dummy-vertex for where ${\bl M}$ crosses ${\re H}$, 
and then adding edges between dummy-vertices that are consecutive on orbits.
When we speak of finding something `in linear [quadratic] time',
then we mean `in time linear [quadratic] in the size of $G^\times$', 
i.e.\ the number of crossings of orbits with edges counts for the input-size. 

Graph $G$ is called \emph{(internally) triangulated} if all (inner) faces are triangles.   
An \emph{outer vertex} is one on an outer face; all others are \emph{inner vertices}.
A \emph{chord} of a face $f$ is an edge not on $f$ for which both endpoints are on $f$.
(In particular, if there is a vertex on $f$ that is incident to a loop, but $f$ is not the loop,
then the loop is considered a chord of $f$.)
For a toroidal graph~$G$, the \emph{cover graph}~$\tilde{G}$ of~$G$
is obtained by pasting a copy of~$G$ into each copy of~$\bT$ in the UCP.

Since we allow non-simple graphs, we have to define separating triangles in cylindrical and toroidal graphs.

\pdfbookmark[3]{Different Rectangular Duals}{GraphsDuals} 
\paragraph{Different Rectangular Duals.} 
A planar rectangular dual $\R$ has a \emph{frame}
if there are exactly four rectangles incident to the outer face of~$\R$.  
A simple planar graph~$G$ then has a planar rectangular dual with a frame if and only if it is a
\emph{properly triangulated planar (PTP) graph}~\cite{He93}:
it is planar, internally triangulated, contains no separating triangle, and the outer face is a 4-cycle.
We define rectangular duals as well as analogous necessary conditions for cylindrical and toroidal graphs
in the respective sections.

A \emph{cylindrical rectangular dual}~$\R$ of cylindrical graph~$G$ is a contact representation of~$G$ 
on a flat cylinder~$Q$ with rectangles such that 
\begin{enumerate}[(i)]
  \item no four rectangles share a point, and
  \item the union of the rectangles form a strip from the left to the right side of~$Q$,
  that is, an area bounded by two parallel lines.
\end{enumerate}
We say that~$\R$ has a \emph{frame} if there is exactly one rectangle incident to each outer face of~$\R$.
Note that we can assume without loss of generality that~$Q$ is a rectangular flat cylinder and axis-aligned; see~\cref{fig:rectangularFlatCylinder}.

Due to some particularities of cylindrical rectangular duals and their realized RELs,
the generalization of PTP graphs for cylindrical graphs has a rather specific extra property
if the graph is not planar.
A graph~$G$ is a \emph{properly triangulated cylindrical (PTC) graph} 
if~$G$ is cylindrical, internally triangulated, all loops, parallel edges, and separating triangles are non-contractible,
the outer faces have no non-consecutive occurrences of the a vertex and no chord,
and any vertex~$u$ with parallel edges to neighbors that have loops has the following degree:
\begin{itemize}
  \item If~$u$ is on an outer face,~$\deg(u) \geq 4$;
  \item if~$u$ has parallel edges to two neighbors that have loops,~$\deg(u) \geq 6$;
  \item otherwise~$\deg(u) \geq 5$.
\end{itemize}
We see in \cref{sec:cylinder} why this is needed.
Having no non-consecutive occurrences of a vertex on an outer face implies
that it is bounded by a cycle, a pair of parallel edges, or a loop.
Furthermore, we want to clarify that since $G$ may not be simple,
a separating triangle can also be spanned by a pair of parallel edges between two vertices $x$ and $y$ and a loop at $x$. 

\begin{figure}[h]
  \centering
  \includegraphics[page=4]{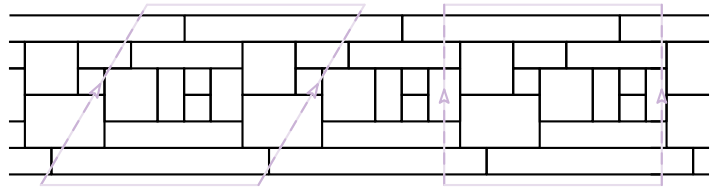}
  \caption{For cylindrical graphs, we can assume that the fundamental polygon~$Q$ is a rectangular flat cylinder.
  If otherwise, we can assume that the top and bottom sides are horizontal,
  consider an infinite chain of copies of~$Q$ glued together in the plane,
  and then pick a rectangular flat cylinder~$Q^\star$
  that covers exactly one copy of each point of~$Q$.}
  \label{fig:rectangularFlatCylinder}
\end{figure}

A \emph{toroidal rectangular dual} of a toroidal graph~$G$ is a contact representation of~$G$ 
on a flat torus~$Q$ with rectangles such that 
\begin{enumerate}[(i)]
  \item no four rectangles share a point, and
  \item the union of the rectangles fills $Q$. 
\end{enumerate}
We generally assume that~$Q$ is oriented such that the rectangles are axis aligned.
A graph~$G$ is a \emph{properly triangulated toroidal (PTT) graph}
if $G$ is toroidal, triangulated, and all loops, parallel edges, and separating triangles are non-contractible.
(The last property is also called \emph{essentially 4-connected}
since then the universal cover graph of $G$ is 4-connected~\cite{BL19}.
If the universal cover graph is simple,
then $G$ also contains no contractible loops or parallel edges.)

\pdfbookmark[3]{Regular Edge Labelings}{REL} 
\paragraph{Regular Edge Labelings.} 
Any rectangular dual~$\R$ of a graph~$G$ naturally gives rise to a 2-coloring and an orientation of the edges of~$G$:
An edge~$\set{u, v}$ is \emph{\bl blue}~[\emph{\re red}$\,$] and oriented~$(u,v)$ if the corresponding 
rectangles~$\R(u)$ and~$\R(v)$ share a horizontal [vertical] segment, 
and~$\R(u)$ is below [to the left of]~$\R(v)$.
Any inner vertex then has the \emph{REL property}: 
Going clockwise around~$v$,
there are incoming {\bl blue} edges, incoming {\re red} edges, outgoing {\bl blue} edges, and outgoing {\re red} edges.
At outer vertices some of these groups are omitted.
A \emph{regular edge labeling} (REL) is a 2-coloring and orientation of the edges of $G$ such that
the REL property holds at inner vertices, and 
some conditions at the outer faces hold.
Specifically, for a \emph{planar REL} building a frame must be feasible (cf. Chaplick \etal~\cite{CFKKRW22}).
For a \emph{cylindrical REL} the two outer faces must be bounded by directed cycles of the same color (say red),
and the REL property most hold at outer vertices except that exactly one group of blue edges is omitted.
On a torus there are no outer vertices, so in a \emph{toroidal REL} all vertices have the REL property.
We let $\cL = ({\bl L_1}, {\re L_2})$ denote a REL with~$\bl L_1$ and~$\re L_2$
representing the sets of {\bl blue} and {\re red} edges, respectively.
Let~$\blG$ and~$\reG$ denote the two (directed) subgraphs of~$G$ induced by~$\bl L_1$ and~$\re L_2$, respectively.
We write~$[2]$ as shortcut for~$\set{1, 2}$.

When speaking about a path, cycle, or walk in $L_i(G)$ (for $i \in [2]$), 
we always mean a directed path, directed cycle, or directed walk, respectively.
Note that~$\blG$ and~$\reG$ are \emph{essentially-acyclic}, i.e.\ all directed cycles are non-contractible.
For if there were, say, a {\re red} contractible cycle~$C$,
then it would split the graph into two sides, and
by the REL property there would need to be a {\bl blue} source on one side, and a {\bl blue} sink on the other;
neither would have the REL property and one could not be an outer vertex.
We direct the dual graph of~$L_i(G)$ so that all its edges cross the corresponding edges of $L_i(G)$ left-to-right.

\pdfbookmark[3]{Cycles and Curves}{Cycles} 
\paragraph{Cycles and Curves.} 
A cycle~$C$ in a directed toroidal graph defines a directed curve~$\cC$ obtained by walking along~$C$;
we therefore apply concepts such as `parallel' also to cycles.
We also consider closed walks in~$G$,
but only the special case where a cycle only touches itself.
More precisely, we say that a closed walk~$C$ is \emph{weakly simple}
if each vertex is visited at most twice and $C$ does not cross itself. 
In particular, following Chang \etal~\cite{CEX15}, for small~$\varepsilon > 0$ we can find a (simple) curve~$\cC$
that has distance at most~$\varepsilon$ to~$C$ with respect to the Fréchet distance.
Slightly abusing notation, we call $\cC$ a \emph{curve that is arbitrarily close to} $C$, 
without explicitly listing the $\varepsilon$.

For~$i \in [2]$, we call a cycle~$C$ \emph{$i$-orbital} if it is parallel to the~$i$th orbit.
For example, in \cref{fig:otherrecduals}, the {\re red} cycle $\croc{d, e , f}$ is {\re 2-orbital}, 
but none of the {\bl blue} cycles is {\bl 1-orbital}.

We later need a method to construct cycles (preferably orbital ones)
from two cycles $C_\ell,C_r$ cycles in $L_i(G)$ for some $i\in [2]$; see \cref{fig:combination}.
These cycles are principally allowed to touch each other,
but since they are cycles in $L_i(G)$ and due to the REL property, they can only touch \emph{unidirectionally}:
At any vertex $u$ in common to $C_\ell$ and $C_r$,
either at most three incident edges of~$u$ belong to $C_\ell \cup C_r$,
or the clockwise order of the four incident edges of $u$ in $C_\ell \cup C_r$ contains two incoming
edges followed by two outgoing edges.

\begin{figure}[h]
  \centering
  \begin{subfigure}[t]{0.28 \linewidth}
    \centering
    \includegraphics[page=2]{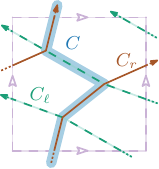}
	\caption{$\hati(C_r, C_\ell) = 1$}
	\label{fig:combination:one}
  \end{subfigure}
  \hspace{1cm}
  \begin{subfigure}[t]{0.25 \linewidth}
    \centering
    \includegraphics[page=1]{combinationExample.pdf}
	\caption{$\hati(C_r, C_\ell) \geq 2$}
	\label{fig:combination:many}
  \end{subfigure}
  \caption{Obtaining a weakly-simple closed walk $\bl C$ from two cycles $\cy C_\ell$ and $\br C_r$.}
  \label{fig:combination}
\end{figure}

\begin{lemma} \label{clm:combined}
  Let~$G$ be a directed essentially-acyclic toroidal graph on~$\bT$. 
  Let~$C_\ell$ and~$C_r$ be two cycles where~$C_\ell$ crosses~$C_r$ only right-to-left and at least once,
  and~$C_\ell$ and~$C_r$ may only touch unidirectionally.
  Let~$e$ be an edge of~$C_\ell$.
  
  Then in linear time we can find a weakly-simple closed walk~$C$ that contains~$e$
  and crosses both~$C_\ell$ and~$C_r$ algebraically.
  If~$\hati(C_r, C_\ell) \geq 2$ then we can choose $C$ to be simple.
  If~$m(C_\ell) < 0 < m(C_r)$ and~$h(C_\ell), h(C_r) > 0$,
  then we can choose~$C$ so that it is simple and {\bl 1-orbital}.
\end{lemma}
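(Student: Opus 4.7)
The plan is a constructive one, split by $n := \hati(C_r, C_\ell) \geq 1$. For $n = 1$, there is a unique shared (crossing) subpath of $C_\ell$ and $C_r$; let $p$ be a vertex on it. I will take $C$ to traverse $C_\ell$ starting at $e$, detour around all of $C_r$ at $p$, and then complete $C_\ell$ back to $e$. Since $C_\ell$ and $C_r$ are each cycles, every vertex is visited at most twice, and the unidirectional-touching hypothesis prevents $C$ from crossing itself, so $C$ is weakly simple. Homologically $[C] = [C_\ell] + [C_r]$, so by bilinearity of the intersection pairing $\hati(C, C_\ell) = \hati(C_r, C_\ell) = 1$ and $\hati(C, C_r) = \hati(C_\ell, C_r) = -1$, both nonzero.

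For $n \geq 2$, I will work in the UCP. Fix a lift $\tilde C_\ell$ through a lift $\tilde e$ of $e$. Along one period of $\tilde C_\ell$ there are $n$ crossings $\tilde p_1, \ldots, \tilde p_n$ with lifts of $C_r$, whose UCP positions are related to one another by a displacement of $\tfrac{1}{n}(m(C_\ell), h(C_\ell))$ (by $\bZ^2$-periodicity of the union of $C_r$-lifts); similarly, crossings along a single lift of $C_r$ are spaced by $\tfrac{1}{n}(m(C_r), h(C_r))$. I will then pick positive integers $a, b$ with $a + b \leq n$, walk $a$ consecutive inter-crossing segments along $\tilde C_\ell$ from $\tilde p_i$ (chosen so that the arc contains $\tilde e$) to $\tilde p_{i+a}$, and then $b$ segments along the lift of $C_r$ through $\tilde p_{i+a}$, and finally project to $\bT$. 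The projection is a simple closed walk $C$, and both algebraic crossings come out nonzero by direct computation in this construction.

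For the $1$-orbital sub-case I set $a := m(C_r)$ and $b := -m(C_\ell)$. The sign hypotheses give $a, b \geq 1$, and
\[
a + b = m(C_r) + |m(C_\ell)| \leq m(C_r)\,h(C_\ell) + |m(C_\ell)|\,h(C_r) = n,
\]
so the construction above applies. The total UCP displacement of the lift of $C$ equals
\[
\tfrac{a}{n}(m(C_\ell), h(C_\ell)) + \tfrac{b}{n}(m(C_r), h(C_r)) = \tfrac{1}{n}\bigl(a\,m(C_\ell)+b\,m(C_r),\ a\,h(C_\ell)+b\,h(C_r)\bigr) = (0, 1),
\]
so $C$ has homotopy class $(0, 1)$ and is $1$-orbital.

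The main obstacle will be showing that the projection of the spliced UCP arcs is truly a simple cycle on $\bT$. In the UCP the two arcs lie on distinct straight lines (in standard form) and meet only at their endpoints, but their interior crossings (points on other lifts of $C_r$, respectively $C_\ell$) project to vertices that could \emph{a priori} coincide. Ruling this out is a lattice argument: the $a-1$ interior crossings of the $\tilde C_\ell$-arc and the $b-1$ interior crossings of the $\tilde C_r$-arc occupy disjoint residue classes in $\bZ/n$, thanks to the bound $a + b \leq n$ and the structure of the index-$n$ sublattice $\langle (m(C_\ell), h(C_\ell)), (m(C_r), h(C_r)) \rangle$ of $\bZ^2$; the unidirectional-touching assumption handles any leftover issues at shared-subpath vertices. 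All steps are linear in the total length of $C_\ell$ and $C_r$.
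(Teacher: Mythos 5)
Your construction follows essentially the same route as the paper's: concatenate $C_\ell$ and $C_r$ at the unique crossing when $\hati(C_r,C_\ell)=1$, and for $\hati(C_r,C_\ell)\geq 2$ splice $a$ inter-crossing segments of a lift of $C_\ell$ with $b$ segments of a lift of $C_r$ in the UCP, with $a=m(C_r)$, $b=-m(C_\ell)$ and the same displacement computation in the orbital case. One imprecision before the main issue: in the general $\hati\geq 2$ case you never fix $a$ and $b$, and the phrasing ``pick positive integers $a,b$ with $a+b\leq n$'' hides that $b$ is not free --- once $a$ is chosen, $b$ is the unique value in $\{1,\dots,n\}$ for which the spliced lift ends at a lattice translate of its start, and you must verify that this forced $b$ is compatible with your bound (for $a=1$ it is, since $b=n$ would force $n\mid\gcd(m(C_\ell),h(C_\ell))=1$, but that check is absent).

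The genuine gap is in the simplicity argument. The two spliced arcs project to subpaths of $C_\ell$ and of $C_r$, so any repeated vertex of $C$ lies in $C_\ell\cap C_r$, and such vertices come in two kinds: crossings and touchings. Your residue-class lattice argument addresses only the crossings. For touching points --- which the hypotheses explicitly permit --- you assert that the unidirectional-touching assumption ``handles any leftover issues,'' but it does not by itself: if the $C_\ell$-piece and the $C_r$-piece of $C$ both run along the boundary of a common patch, then $C$ revisits the touching subpath and fails to be simple. Excluding this is precisely where the hypothesis that $G$ is \emph{essentially acyclic} is needed (a hypothesis your proof never invokes): in one configuration the repeated visit would place a directed cycle on the boundary of a single patch, which is contractible, contradicting essential acyclicity; in the other it would force the two chosen arcs to emanate from a common start vertex, which the construction rules out when $\hati\geq 2$. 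Without this step the ``simple'' conclusions of the lemma are not established. A smaller omission: the linear-time claim needs a word on how to carry out the UCP construction directly on $G$, e.g.\ by marking the shared subpaths of $C_\ell$ and $C_r$ in one traversal and classifying each as a crossing or a touching.
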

\begin{proof}
By the assumptions we have~$\hati(C_r, C_\ell) \geq 1$.
We first dispatch with the easier case where $\hati(C_r, C_\ell) = 1$.
Since $C_\ell$ crosses $C_r$ only right-to-left,
this means that there is exactly one crossing between them, say at subpath $C_\ell[u,v]$.
Define $C$ to be the closed walk that first traverses $C_\ell$ in its entirety (beginning and ending at $u$) and then traverses all of $C_r$.
Clearly, $C$ visits only $C_\ell[u,v]$ and where $C_\ell$ and $C_r$ touch twice, so is weakly simple, 
crosses both $C_\ell$ and $C_r$, and contains~$e$; see \cref{fig:combination:one}.
Neither of the two conditions that require $C$ to be simple can apply:
the first one contradicts the case assumption, and 
if $m(C_\ell) < 0 < m(C_r)$ and~$h(C_\ell), h(C_r) > 0$ then 
$\hati(C_r, C_\ell) = m(C_r)h(C_\ell) - h(C_r)m(C_\ell) \geq 1 \cdot 1 - 1(-1) = 2$,
again contradicting the case assumption.

Suppose now that $\hati = \hati(C_r, C_\ell) \geq 2$.
For simplicity, we assume that crossings between~$C_\ell$ and $C_r$ only occur at singleton vertices.
(Otherwise, first contract all shared paths and then expand them again in~$C$ later).
Let~$\cC_\ell$ and $\cC_r$ be the curves corresponding to $C_\ell$ and~$C_r$, respectively.
Since $C_r$ can cross $C_\ell$ only right-to-left, the number of crossings between them is exactly $\hati$
and therefore $\cC_\ell$ dissects $\cC_r$ into $\hati$ \emph{segments} and vice versa.
Let $v$ be the crossing at the start of the segment of $\cC_\ell$ containing the (potentially contracted) edge $e$.
We may assume that~$\cC_\ell$ and~$\cC_r$ are in standard form and that $v$ lies at the origin.

Refer to \cref{fig:UCP} for the following construction of a curve $\overline{\cC}$ in the UCP.
Pick some integer~$x_\ell$ with~$0 < x_\ell \leq \hati$ (we detail this choice below).
Begin at $v$, and follow~$x_\ell$ segments of~$\cC_\ell$ to reach a point~$p$ (a crossing with $\cC_r$).
Since $v$ belonged to both $\cC_\ell$ and $\cC_r$, 
following now a copy of $\cC_r$ until we reach a copy~$v'$ of~$v$, necessarily at a grid point (since $v$ was at the origin).
Let~$x_r$ be the number of segments of~$\cC_r$ between~$p$ and~$v'$.
We have~$0 < x_r \leq \hati$ since~$v' \neq p$ and since~$\cC_r$ has~$\hati$ segment-pieces.

Let~$\cC$ be the curve corresponding to $\overline{\cC}$ on $\bT$.
We verify that~$\cC$ has the desired properties:
\begin{itemize}
  \item By choice of $v$, we get that $\cC$ contains edge~$e$.
  \item Curve $\cC$ crosses $\cC_\ell$ only left-to-right since $\cC_\ell$ crosses $\cC_r$ only right-to-left.
	Also, $\cC$ crosses $\cC_\ell$ at least once, namely, 
	at the path $\cC_\ell[v,p]$ that is shared between $\cC$ and $\cC_\ell$.
	Similarly, one argues~$\hati(\cC, \cC_r) < 0$. 
 \item We can ensure that $\cC$ is simple with an appropriate choice of~$x_\ell$ as follows.

	\textit{Case 1} -- 
	We do not care whether~$\cC$ is orbital or not. In this case we use $x_\ell = 1$.
	Since $\cC_\ell$ consists of at least $\hati \geq 2$ segments, we then have $p \neq v$.
	By definition of `segments' there is no crossing of~$\cC_\ell$ and~$\cC_r$ in~$\cC_\ell(v,p)$.
	In particular therefore~$\cC_r(p,v')$ cannot visit a point that corresponds to a point in~$\cC_\ell(v,p)$.
	Since~$\cC_\ell$ and~$\cC_r$ were simple curves, therefore~$\cC$ is simple.
	See \cref{fig:combo:slanted}.

	\textit{Case 2} -- 
	We want~$\cC$ to be orbital and know~$m_\ell < 0 <m_r$ and~$h_\ell, h_r > 0$: 
	In this case we pick~$x_\ell = m_r$ (see \cref{fig:combo:orbital})
	and claim that then the choice~$x_r = {-}m_\ell$ is correct.
	To see this, recall that for~$\alpha \in \set{\ell,r}$ curve~$\cC_\alpha$ has been cut into~$\hati$ segments.
	After transformation to the standard form and looking at the UCP, 
	curve~$\cC_\alpha$ becomes a line through the vector ${m(\cC_\alpha) \choose h(\cC_\alpha)}$ and each
	of its segments becomes an equal-length piece.
	Therefore taking~$m(C_r)$ pieces of~$\cC_\ell$ and~${-}m(C_\ell)$ pieces of~$\cC_r$ brings us from the origin to point
	\[
		\frac{m(C_r)}{\hati} {m(C_\ell) \choose h(C_\ell)} + \frac{-m(C_\ell)}{\hati} {m(C_r) \choose h(C_r)} 
		= \frac{1}{\hati} {{m(C_r)m(C_\ell) -m(C_\ell) m(C_r)} \choose {m(C_r)h(C_\ell) -m(C_\ell) h(C_r)}} 
		= \frac{1}{\hati} {0 \choose \hati} 
		= {0\choose 1}.
	\]
	In particular, this brings us to a grid point, hence a copy of~$v$, 
	and since~${-}m(C_\ell) < \hati$ this choice of~$x_r$ works.
	(In fact, it is the only possible choice of~$x_r$.)
	Also, the copy of~$v$ that we have reached is at grid point~${0 \choose 1}$, 
	which means that curve~$\cC$ is in homotopy class~$(0,1)$, i.e.\ it is {\bl 1-orbital} as desired.
	Finally, the y-coordinates of the curve in the UCP are strictly increasing, and stay in the interval~$[0,1]$; 
	this cannot revisit a copy of a vertex twice since those copies would have the same y-coordinate.  
	So~$\cC$ is simple. 
\end{itemize}

\begin{figure}[t]
  \centering
  \begin{subfigure}[t]{0.28 \linewidth}
    \centering
    \includegraphics[page=4,trim=15 0 21 85,clip]{UCP.pdf}
	\caption{Choosing $x_\ell = 1$ ensures simplicity for $\hati \geq 2$.}
	\label{fig:combo:slanted}
  \end{subfigure}
  \hspace{0.5cm}
  \begin{subfigure}[t]{0.25 \linewidth}
    \centering
    \includegraphics[page=3,trim=15 0 21 85,clip]{UCP.pdf}
	\caption{Choose $x_\ell = 2 = \cy m_r$ to be orbital.}
	\label{fig:combo:orbital}
  \end{subfigure}
  \hspace{0.5cm}
  \begin{subfigure}[t]{0.25 \linewidth}
  	\centering
    \includegraphics[page=5]{UCP.pdf}
	\caption{Translating the curve from~(a) back to $G$.}
	\label{fig:combo:translate}
  \end{subfigure}
  \caption{For the same two curves as in \cref{fig:UCP}, in \cref{clm:combined} with $\hati(C_r, C_\ell) \geq 2$,
  we find $\bl C$ (via $\bl \cC$) by following segments of $\cy C_\ell$ and $\br C_r$.}
  \label{fig:combo}
\end{figure}

So we have now (in both cases) defined a curve $\cC$ that consists of curve-pieces of~$\cC_\ell$ and~$\cC_r$,
which naturally correspond to cycle-pieces of~$C_\ell$ and~$C_r$;
let~$C$ be the closed walk that we obtain after putting these pieces together.
Before discussing why~$C$ satisfies all required properties,
we first note that we can find~$C$ directly and efficiently in~$G$, without going via the UCP.
First, we walk along $C_\ell$ and mark all vertices; then walk along $C_r$ so that we from now
on know all shared subpaths; along the way we can determine for each whether it is a touching
point or a crossing.
Next, to find~$v$, we go backward from~$e$ along~$C_\ell$ until we encounter a shared subpath that is a crossing;
$v$ is the last vertex of this subpath.
To find~$C$, we follow~$x_\ell$ cycle-pieces (defined by the crossings) of~$C_\ell$ starting at~$v$
and then follow $x_r$ cycle-pieces of~$C_r$ (thus again reaching~$v$).

Most properties of~$\cC$ transfer naturally to~$C$ since the curves were arbitrarily close to the cycles; 
in particular~$C$ contains~$e$ and crosses~$C_\ell$ and~$C_r$ algebraically but does not cross itself.
To show that $C$ is simple, we show that $C$ does not touch itself even when~$C_\ell$ and~$C_r$ touch.
Assume for contradiction that it does, say maximal subpath~$C[a,b]$ is visited again as~$C[c,d]$ and this is not a crossing.
Since $\hati(C_r, C_\ell) \geq 2$, the cycles $C_\ell$ and $C_r$  dissect $G$ into bounded regions, called \emph{patches}, in a grid like fashion.
Suppose $C$ touches itself in a patch, and let this path have the two boundary segments $C[s,x] \subset C_\ell$ and $C[x, t] \subset C_r$ on the left
as well as $C[s,y] \subset C_r$ and $C[z, t] \subset C_\ell$ on the right.
Since $C_\ell, C_r$ are simple cycles, 
the two visits of $C$ at $C[a,b]$ and $C[c,d]$ must happen once 
when $C$ follows $C_\ell$ and once when $C$ follows $C_r$.
Up to renaming $C[a,b] \subset C_\ell$, and we assume $C[a,b] \subset C_\ell[s,x]$ (the case $C[a,b] \subset C_\ell[y,t]$ is symmetric).
We know that $C[c,d] \subset C_r$, and since the touching point occurs in this patch,
we therefore must have $C[c,d] \subset C_r[x,t]$ or $C[c,d] \subset C_r[s,y]$.
\begin{figure}[h]
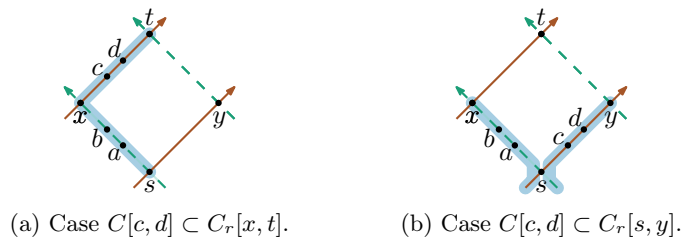

  \centering
  \begin{subfigure}[t]{0.28 \linewidth}
    \centering
    \includegraphics[page=3]{combinationExample.pdf}
	\caption{Case $C[c,d] \subset C_r[x,t]$.}
	\label{fig:notouching:left}
  \end{subfigure}
  \hspace{1cm}
  \begin{subfigure}[t]{0.25 \linewidth}
    \centering
    \includegraphics[page=4]{combinationExample.pdf}
	\caption{Case $C[c,d] \subset C_r[s,y]$.}
	\label{fig:notouching:s}
  \end{subfigure}
  \caption{For the proof of \cref{clm:combined}, we show that the cycle $\bl C$ cannot touch itself.}
  \label{fig:notouching}
\end{figure}
Assume for contradiction that $C[c,d] \subset C_r[x,t]$; see \cref{fig:notouching:left}.
Since~$C$ consists of cycle-pieces, therefore $C$ includes this entire side of the patch, $C_\ell[s,x] \cup C_r[x,t] \subset C$.   
But this directed path has repeated points (at $a,b,c,d$), so it contains a directed cycle.    
Furthermore, this directed cycle is a finite cycle in the UCP, since it resides on the boundary of one patch.
As such, it is contractible, contradicting that $G$ is essentially-acyclic.   
So the only way in which cycle $C$ could touch itself is that $C[c,d] \subset C_r[s,y]$; see \cref{fig:notouching:s}.
In this case, $C$ uses the cycle-pieces $C_\ell[s,x]$ as well as $C_r[s,y]$, 
and we must have used two pieces of $C_\ell$ and $C_r$ that go away from a common start point.   
But this does not happen since $C_\ell(v,p)$ and $C_r(p,v')$ do not have a vertex in common 
(cycle $C$ would cross itself here),
and $p$ and $v$ do not correspond to the same vertex of $G$ by $\hat{i}\geq 2$.
\end{proof}

We also use results symmetric to \cref{clm:combined} for~$i = 2$ and/or $e \in C_r$.

\section{Rectangular Dual on the Torus} 
\label{sec:torus}
In this section, we give an algorithm to decide whether a given toroidal graph~$G$ on the flat torus~$\bT$
admits a toroidal rectangular dual, 
assuming that a toroidal REL $\cL$ has been specified.
To this end, we first classify toroidal RELs and then describe construction algorithms.

\subsection{Toroidal RELs: the Good, the Bad, and the Ugly} 
\label{sec:GoodBadUgly}
We classify $\cL$ based on whether it can be realized by a rectangular dual 
on the rectangular or slanted flat torus or not at all.

\paragraph{A ``Bad'' REL.}
For $i \in [2]$, an edge $e \in L_i(G)$ is called \emph{lonely} if it does not belong to any cycle of~$L_i(G)$.
We call $\cL$ \emph{unrealizable} if it has a lonely edge, and \emph{realizable} otherwise. 
As we show in \cref{sec:toroidalRD}, only a realizable REL can come from a rectangular~dual.

\paragraph{A ``Good'' REL.}
For $i \in [2]$, an edge $e \in L_i(G)$ is called \emph{orbital} 
if it lies on an $i$-orbital cycle of $L_i(G)$.
We call $\cL$ \emph{$i$-orbital} if all edges in $L_i(G)$ are orbital,
and \emph{orbital} if it is {\bl 1-orbital} and {\re 2-orbital}.   
(The REL in \cref{fig:otherrecduals} is not {\re 2-orbital}: the {\re red} edge $(d,i)$
does not belong to a {\re 2-orbital} {\re red} cycle.)
Note that orbital implies realizable.
Moreover, we show below that `orbital' is equivalent to `realizable by a rectangular dual on the rectangular flat torus'.

\paragraph{An ``Ugly'' REL.}
Presuming $\cL$ is realizable, we call the REL \emph{slanted} if, 
for some $i \in [2]$, graph~$L_i(G)$ contains an edge that is not $i$-orbital.
Such a REL is realizable by a toroidal rectangular dual, but not on the rectangular flat torus.

\paragraph{Recognition.} 
We can test whether $\cL$ is realizable in linear time by computing
the strongly connected components of $\blG$ and $\reG$~\cite{Tar72} 
and checking that all edges belong to one (lonely edges would not).
Assume from now on that $\cL$ is realizable, otherwise we are done.
The following observation becomes useful later:

\begin{observation} \label[observation]{clm:reverseparallel}   
  If a toroidal REL $\cL$ is realizable, and if, for some $i \in [2]$, 
  two cycles $C, C'$ in $L_i(G)$ satisfy $\hati(C, C') = 0$,
  then $C$ and $C'$ are parallel.
\end{observation}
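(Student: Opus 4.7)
The plan is to argue by contradiction: suppose $C$ and $C'$ are not parallel. Since $L_i(G)$ is essentially-acyclic, neither cycle is contractible, so the classification from the preliminaries (``$\hati(\cdot,\cdot)=0$ implies parallel, reverse-parallel, or one contractible'') forces $C$ and $C'$ to be reverse-parallel. My goal is then to rule this case out. Without loss of generality take $i=1$, so $C,C'$ are blue cycles with $(m(C'),h(C')) = -(m(C),h(C))$.

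The first real step is a one-sided crossing lemma: for every red cycle $C_r$, every crossing of $C_r$ with $C$ is left-to-right, so $\hati(C_r,C) \ge 0$, with equality if and only if $C_r$ and $C$ share no vertex. Because $C_r$ uses only red edges and $C$ only blue edges, every maximal common sub-curve is a single vertex $u$. Walking along $C$ through $u$, the clockwise REL order ``incoming blue, incoming red, outgoing blue, outgoing red'' places all incoming-red edges at $u$ strictly on one side of $C$ and all outgoing-red edges strictly on the other side. Since $C_r$ enters $u$ along an incoming-red edge and leaves along an outgoing-red edge, $C_r$ crosses $C$ at $u$ from left to right.

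Next I apply the bilinear formula $\hati(C_r,C) = m(C_r)h(C) - h(C_r)m(C)$ together with the reverse-parallel relation to get $\hati(C_r,C') = -\hati(C_r,C)$. The one-sided crossing lemma, applied to both $C$ and $C'$, yields $\hati(C_r,C),\hati(C_r,C') \ge 0$, so both vanish. Its sharp form then forces $C_r$ and $C$ to be vertex-disjoint. For the contradiction, pick any vertex $u$ of $C$: the REL property at $u$ (an inner vertex, since we are on the torus) supplies an incoming-red edge $e$ at $u$; by realizability (no lonely edges) $e$ lies on some red cycle $C_r$, which must then contain $u \in V(C)$, contradicting the vertex-disjointness just established. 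The case $i=2$ is completely symmetric, the only change being that blue cycles cross red cycles right-to-left instead of left-to-right.

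The main obstacle I anticipate is the orientation bookkeeping in the one-sided crossing lemma: correctly reading off which side of $C$ at $u$ contains the incoming-red edges once the direction of travel along $C$ is fixed, and confirming that a shared singleton sub-curve satisfies the earlier definition of a crossing rather than a touch. Once the sides are pinned down, the bilinear formula and the non-emptiness of the incoming-red group at every inner vertex reduce the rest to a short sign computation and a single appeal to realizability.
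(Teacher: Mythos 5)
Your proof is correct, but it takes a genuinely different route from the paper's. The paper argues that two reverse-parallel cycles in $L_i(G)$ can neither cross nor touch (by the REL property), so they dissect the torus into two regions; it then takes the two dual curves running just to the right of $C$ and of $C'$ and observes that together they form a directed cut in $L_{3-i}(G)$, whose edges are therefore lonely. You instead never examine how $C$ and $C'$ interact with each other: you establish the one-sided crossing property (every cycle of $L_{3-i}(G)$ crosses every cycle of $L_i(G)$ with a fixed sign, and strictly whenever they meet), combine it with the bilinearity of $\hati$ under the reverse-parallel hypothesis to force every cycle of $L_{3-i}(G)$ to be vertex-disjoint from $C$, and then contradict realizability at a single incoming edge of the other color at any vertex of $C$. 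Your version buys a cleaner logical structure: it sidesteps the step the paper dispatches with ``it is easy to see'' (that reverse-parallel cycles in $L_i(G)$ neither cross nor touch), replacing a topological dissection argument by a short sign computation, and the one-sided crossing lemma you isolate is exactly the fact the paper reuses implicitly in \cref{clm:orbitalImpliesUpward}. What the paper's argument buys in exchange is a more global picture: it exhibits an entire directed cut of lonely edges rather than a single one, which is the form in which the obstruction is easiest to visualize (cf.\ \cref{fig:noReverseParallel}). One small point to make explicit when you write this up: the nonemptiness of the incoming group of the other color at every vertex relies on all vertices of a toroidal graph being inner, which you correctly note but which is the load-bearing use of the torus here.
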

\begin{proof}
Cycles $C, C'$ are non-contractible since toroidal RELs have no contractible cycles.
So by $\hati(C, C') = 0$ they are either parallel or reverse-parallel.
Assume for contradiction the latter.
Then it is easy see that, due to the REL property and $C$ and $C'$ being cycles,
they do neither intersect nor touch.
(Crossings would force a contractible cycle or, like touching, a curve to self intersect.)
Hence, $C$ and $C'$ dissect the flat torus into two areas, 
one to the right and one to the left of both. 
Consider the \textit{other} graph~$L_{3-i}(G)$.  
Any vertex~$v$ on~$C$ has outgoing edges in~$L_{3-i}(G)$, 
and by the REL property these lead to the right of~$C$.
Therefore, the curve $D$ that is slightly
to the right of~$C$ corresponds to a directed cycle in the dual graph of~$L_{3-i}(G)$.
Similarly, the curve $D'$ that is slightly to the right of $C'$ 
corresponds to a directed cycle in the dual graph of~$L_{3-i}(G)$.
Curves~$D$ and $D'$ are disjoint and together form a directed cut in~$L_{3-i}(G)$ between the two areas; see \cref{fig:noReverseParallel}.
The edges in this cut are lonely, a contradiction to~$\cL$ being realizable.
\end{proof}
\begin{figure}[b]
	\centering    
  \begin{minipage}[t]{0.46\linewidth}
	\centering    
    \includegraphics[page=1]{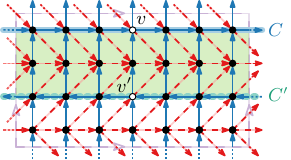}
	\subcaption{$\bl C$ and~$\cy C'$ can be reverse-parallel\ldots}
  \end{minipage}
  \hfill
  \begin{minipage}[t]{0.5\linewidth}
	\centering    
    \includegraphics[page=2]{noReverseParallel}
	\subcaption{\ldots but then the REL has lonely edges.}
  \end{minipage}
  \caption{No realizable REL has reverse-parallel cycles.}
\label{fig:noReverseParallel}
\end{figure}

We now work towards the (non-trivial) algorithm to test whether $\cL$ is orbital,
which depends on the embedding on~$\bT$ (and not just on the rotation system).
We call a cycle~$C$ in~$L_i(G)$ a \emph{left-first} [\emph{right-first}$\,$] cycle
if the leftmost [rightmost] outgoing edge at each vertex of~$C$ is in~$C$.
Due to this preference of outgoing edges, the following is easily shown:

\begin{observation} \label[observation]{clm:LeftRightProperties}
  Let $C_\ell$ and $C_r$ be a left-first and right-first cycle of $L_i(G)$ for some $i \in [2]$ of a toroidal REL. Then:
  \begin{enumerate}[(1)] 
	\item No cycle in $L_i(G)$ can cross $C_\ell$ right-to-left or cross $C_r$ left-to-right.
		\label{clm:leftRight:howtocross}
	\item All left-first [right-first] cycles in $L_i(G)$ are parallel with $C_\ell$ [$\,C_r$].
		\label{clm:leftRight:parallelLeft}
	\item If $C_\ell$ and $C_r$ are parallel, then all cycles in $L_i(G)$ are parallel to both.
		\label{clm:leftRight:parallelBoth}
  \end{enumerate}
\medskip
\end{observation}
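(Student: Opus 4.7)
The plan is to handle the three parts in order, leaning on the defining property of left-first and right-first cycles together with \cref{clm:reverseparallel}.

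For part~(1), I would argue by contradiction. Suppose a cycle $C$ in $L_i(G)$ crosses $C_\ell$ right-to-left at some maximal shared sub-curve $C_\ell[p,q]$. By the definition of a right-to-left crossing, $C$ leaves the shared subpath at $q$ towards the left side of $C_\ell$; hence the outgoing edge of $C$ at $q$ lies strictly farther left (in the rotation around $q$) than the outgoing edge of $C_\ell$ at $q$. This contradicts the left-first property of $C_\ell$, which says that its own outgoing edge at $q$ is already the leftmost outgoing edge of $L_i$ at $q$. The case of $C_r$ and left-to-right crossings is entirely symmetric: a left-to-right crossing would force an outgoing edge strictly to the right of $C_r$'s outgoing edge at $q$, contradicting that $C_r$ picks the rightmost one at each vertex.

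For part~(2), let $C$ be an arbitrary left-first cycle in $L_i(G)$. Part~(1) applied to $C_\ell$ shows that $C$ cannot cross $C_\ell$ right-to-left. Applying part~(1) a second time, now with the roles of the two cycles swapped (and using that $C$ too is left-first), gives that $C_\ell$ cannot cross $C$ right-to-left, which is the same as saying $C$ cannot cross $C_\ell$ left-to-right. Therefore $\hati(C, C_\ell) = 0$, and \cref{clm:reverseparallel} yields that $C$ and $C_\ell$ are parallel. The statement for right-first cycles and $C_r$ is symmetric.

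For part~(3), let $D$ be any cycle in $L_i(G)$. By part~(1) we have $\hati(D, C_\ell) \geq 0$ (no right-to-left crossings of $C_\ell$) and $\hati(D, C_r) \leq 0$ (no left-to-right crossings of $C_r$). Since $C_\ell$ and $C_r$ are parallel, the preliminaries give $\hati(D, C_\ell) = \hati(D, C_r)$, so both values must equal $0$. A final application of \cref{clm:reverseparallel} shows that $D$ is parallel to $C_\ell$, and hence also to $C_r$.

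The main obstacle will be pinning down part~(1), specifically the local picture at the vertex $q$ where $C$ and $C_\ell$ diverge. One has to be careful with the convention of what ``leftmost outgoing'' means relative to the rotation system, and to cover both the degenerate case where the shared sub-curve is a single vertex and the case where it is a longer path. The REL property ensures that at every vertex the outgoing $L_i$-edges form a consecutive block in the rotation, so ``leftmost'' within that block is unambiguous, and the fact that $C$ exits the shared subpath towards the left of $C_\ell$ translates cleanly into $C$'s outgoing edge at $q$ sitting strictly farther left in this block than $C_\ell$'s.
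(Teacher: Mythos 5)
Your proposal is correct and follows essentially the same route as the paper: part (1) from the local ``leftmost/rightmost outgoing edge'' property, part (2) by applying (1) twice with the roles of the two left-first (resp.\ right-first) cycles swapped, and part (3) by combining the two sign constraints on $\hati$ with parallelism and then invoking \cref{clm:reverseparallel}. Your write-up is merely more explicit than the paper's about the local picture at the divergence vertex and about the (needed) appeal to \cref{clm:reverseparallel} in part (2).
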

\begin{proof} 
Claim~(1) holds for~$C_\ell$ since no vertex on~$C_\ell$ has outgoing edges to the left of~$C_\ell$.
Claim~(2) holds for~$C_\ell$ because $C_\ell$ could not cross a left-first cycle $C_\ell'$ right-to-left
by (1) or left-to-right by~(1) applied to $C_\ell'$ with respect to~$C_\ell$.
Symmetric arguments show~(1) and~(2) for~$C_r$.

To see (3), fix an arbitrary cycle~$C$ in~$L_i(G)$. 
We have~$\hati(C, C_\ell) \geq 0$ by (1),~$\hati(C, C_r) \leq 0$ by (1),
and~$\hati(C, C_\ell) = \hati(C, C_r)$ since~$C_\ell, C_r$ are parallel.
This is possible only if $\hati(C, C_\ell) = 0 = \hati(C, C_r)$,
which by \cref{clm:reverseparallel} gives the result.
\end{proof}

For~$i \in [2]$, we can compute a left-first cycle~$C_\ell$
and a right-first cycle~$C_r$ in~$L_i(G)$ in linear time with graph traversals.
Then we compute~$\hati(A, B)$ for any distinct~$A, B \in \set{C_\ell, C_r, {\bl M}, {\re H}}$,
where ${\bl M}$ and ${\re H}$ are the meridian and the horizon of the embedding on $\bT$.
For~$i = 1$, a cycle $C$ in $L_i(G)$ is called~\emph{$i$-\upward} if~$h(C) > 0$ and cycles $C_\ell, C_r$
are called~\emph{$i$-\enclosing} if either~$m(C_\ell) = 0 = m(C_r)$ or~$m(C_\ell) < 0 < m(C_r)$.
(By~\cref{clm:LeftRightProperties}(\ref{clm:leftRight:parallelLeft})
this is independent of the choice of $C_\ell$ and $C_r$.) 
For $i = 2$, the definition is symmetric; exchange $h$ and~$m$.

\begin{observation} \label[observation]{clm:orbitalImpliesUpward}
For an orbital REL~$\cL$ and $i \in [2]$, any cycle~$C$ in~$L_i(G)$ is~$i$-\upward.
\end{observation}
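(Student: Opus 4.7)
The plan is to find, for any cycle $C$ in $L_i(G)$, a cycle $D^*$ in the \emph{other} color class whose existence is guaranteed by the orbital hypothesis and whose intersections with $C$ are all crossings of a single topological sign. The desired sign of $h(C)$ (or $m(C)$) then falls out of the crossing-number formula in one line.

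I would handle $i=1$; the case $i=2$ is entirely symmetric with colors and orbits swapped. Let $C$ be any cycle in $L_1(G)$ and fix a vertex $v$ on it. Since on a toroidal REL every vertex has the full REL property, $v$ has at least one outgoing red edge $e'$. Because $\cL$ is in particular $2$-orbital, $e'$ lies on a $2$-orbital cycle $D^*$ of $L_2(G)$, which has homotopy class $(1,0)$; in particular $D^*$ passes through $v$.

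Next, because $C$ is blue and $D^*$ is red, they share no edges, so their intersections on $\bT$ occur only at common vertices. At any such common vertex $u$, the REL property forces the clockwise order of the four relevant edges to be $C_{\mathrm{in}}, D^*_{\mathrm{in}}, C_{\mathrm{out}}, D^*_{\mathrm{out}}$. Checking which sector lies to the left of $D^*$'s direction of travel shows that $C$ enters $u$ from the right of $D^*$ and leaves on the left, which is the same as $D^*$ crossing $C$ left-to-right. Thus every intersection of $D^*$ with $C$ is a crossing contributing $+1$ to $\hati(D^*, C)$, and since $v$ is one such vertex we get $\hati(D^*, C) \geq 1$.

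Finally, the homotopy-class formula gives $\hati(D^*, C) = m(D^*)h(C) - h(D^*)m(C) = h(C)$, so $h(C) \geq 1 > 0$, as required. For $i = 2$ I would pick $D^*$ of class $(0,1)$ through an outgoing blue edge at $v$; the analogous clockwise bookkeeping then shows every crossing contributes $-1$, so $\hati(D^*, C) = -m(C) \leq -1$ and hence $m(C) \geq 1 > 0$, which is exactly $2$-orbit-advancing. The only delicate step is the clockwise/left--right bookkeeping at a shared vertex; once the REL's interleaved pattern is verified to force a uniform crossing sign, the remainder is a single application of the homotopy-class formula.
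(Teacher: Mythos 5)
Your proposal is correct and follows essentially the same route as the paper: pick a vertex $v$ on $C$, use the orbital hypothesis to obtain an orbital cycle of the other color through a red (resp.\ blue) edge at $v$, observe that the REL property forces all its crossings with $C$ to have a uniform sign, and conclude via the formula $\hati(\cC,\cC') = m(\cC)h(\cC') - h(\cC)m(\cC')$. The only cosmetic difference is that the paper takes an \emph{incoming} red edge at $v$ rather than an outgoing one, which changes nothing.
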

\begin{proof}
We only show the claim for~$i = 1$ 
(i.e.\ $\bl C$ is blue/in~$\blG$ and we must show~$h({\bl C}) > 0$); 
the proof for~$i = 2$ is similar.
Fix an arbitrary vertex~$v \in \bl C$, and let~$e$ be an incoming red edge at~$v$.   
Since~$\cL$ is orbital, there exists a {\re 2-orbital} cycle~$\re C_e$ containing~$e \in \reG$.
By the REL property~$\re C_e$ crosses~$\bl C$ at~$v$, 
and any other crossing of the red cycle~$\re C_e$ with the blue cycle~$\bl C$ (if any) is also left-to-right,
so $\hati({\re C_e},{\bl C})>0$.
Since~$(m({\re C_e}), h({\re C_e})) = (1, 0)$,
we have~$0 < \hati({\re C_e}, {\bl C}) = m({\re C_e})h({\bl C}) - h({\re C_e})m({\bl C}) = h({\bl C})$.
\end{proof}

Using \cref{clm:combined,clm:orbitalImpliesUpward}, we can show the following characterization.
We use repeatedly the observation that a $1$-\upward cycle~$C$ with $m(C) = 0$ is 1-orbital.

\begin{proposition} \label[proposition]{clm:orbitalCharacterization}
  A realizable REL~$\cL$ is orbital if and only if, for both~$i \in [2]$, 
  a left-first cycle~$C_\ell$ and right-first cycle~$C_r$ in~$L_i(G)$
  are~$i$-\upward and~$i$-\enclosing. 
\end{proposition}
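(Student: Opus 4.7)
The plan is to treat the two implications separately. I take $i = 1$ throughout; the case $i = 2$ is symmetric.

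\textbf{Forward direction.} Assume $\cL$ is orbital. That $C_\ell$ and $C_r$ are $i$-advancing follows immediately from \cref{clm:orbitalImpliesUpward}. To obtain $m(C_\ell) \leq 0 \leq m(C_r)$, I would pick any $i$-orbital cycle $C^\star$ (which exists by orbitality), apply \cref{clm:LeftRightProperties}(1) to $C^\star$ against $C_\ell$ and $C_r$, and read off the inequalities from the homotopy-class formula for $\hati$. The last step is to rule out the mixed case $m(C_\ell) = 0$ and $m(C_r) > 0$ (the symmetric case is analogous). In this case $C_\ell$ itself is 1-orbital but $C_r$ is not, so I would choose an edge $e \in C_r \setminus C_\ell$ and a 1-orbital cycle $C^\star_e$ through $e$ guaranteed by orbitality. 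Then \cref{clm:reverseparallel} applied to $C^\star_e, C_\ell$ forces them to be parallel and hence non-crossing, while the right-first property of $C_r$ makes $C^\star_e$ exit every shared subpath with $C_r$ to the left. Combined with the algebraic identity $\hati(C^\star_e, C_r) = -m(C_r) < 0$, which demands $C^\star_e$ enter shared subpaths from the right of $C_r$ at least $m(C_r)$ times, this yields a contradiction by tracking $C^\star_e$ in the universal covering plane.

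\textbf{Reverse direction.} Assume $C_\ell$ and $C_r$ are $i$-advancing and $i$-enclosing. In the case $m(C_\ell) = m(C_r) = 0$, both are 1-orbital and parallel, so by \cref{clm:LeftRightProperties}(3) every cycle in $L_i(G)$ is 1-orbital, and realizability places every edge on such a cycle. In the case $m(C_\ell) < 0 < m(C_r)$, I would first observe that every cycle $D$ in $L_i(G)$ satisfies $h(D) > 0$: the inequalities $\hati(D, C_\ell) \geq 0$ and $\hati(D, C_r) \leq 0$ from \cref{clm:LeftRightProperties}(1), together with the strict $m$-signs of $C_\ell, C_r$ and positivity of their $h$-values, exclude $h(D) \leq 0$. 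Then for each edge $e$, I take any cycle $D_e$ through $e$ (which exists by realizability): if $m(D_e) = 0$ then $D_e$ is itself 1-orbital; if $m(D_e) > 0$ I apply \cref{clm:combined} to $(C_\ell, D_e)$; and if $m(D_e) < 0$ I apply it to $(D_e, C_r)$. In either of the last two applications the hypotheses $m_\ell < 0 < m_r$ and $h_\ell, h_r > 0$ of the lemma's 1-orbital conclusion are satisfied (using the symmetric variant when $e$ lies in the second cycle), producing a simple 1-orbital cycle through $e$.

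\textbf{Main obstacle.} The crux is the mixed-case exclusion in the forward direction. Left and right of a non-contractible cycle do not separate the torus, so a naive ``trapped on one side'' argument fails; the proof must either work in the universal covering plane (lifting $C^\star_e$, $C_r$, and $C_\ell$ and using that the copies of $C_\ell$ form vertical columns through which $C_r$ advances by $m(C_r)$ per period) or carry out a careful combinatorial analysis of the shared subpaths of $C^\star_e$ and $C_r$ under the right-first constraint. All remaining steps reduce to direct invocations of the preceding observations and routine homotopy-class computations.
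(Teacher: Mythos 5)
There is a genuine gap in the forward direction, in exactly the place you flag as the crux. In the mixed case $m(C_\ell)=0<m(C_r)$ you take an \emph{arbitrary} edge $e\in C_r\setminus C_\ell$ and try to force a contradiction out of how the 1-orbital cycle $C^\star_e$ interacts with $C_r$. But the facts you list are mutually consistent: a cycle of homotopy class $(0,1)$ that contains $e$, never crosses $C_\ell$, exits every shared subpath with $C_r$ to the left (as the right-first property forces), and crosses $C_r$ right-to-left algebraically $m(C_r)$ times is a perfectly realizable picture in the universal covering plane --- the lift of $C^\star_e$ climbs inside one vertical strip between consecutive lifts of $C_\ell$ while lifts of $C_r$ sweep rightwards through that strip, meeting it only right-to-left. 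For a generic edge of $C_r$ (say one that leaves a vertex of $C_\ell$ to the right), a 1-orbital cycle through it can genuinely exist under these hypotheses, so ``tracking $C^\star_e$ in the UCP'' does not produce a contradiction; no amount of bookkeeping of the $C^\star_e$-versus-$C_r$ crossings will.

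The contradiction lives in the interaction with $C_\ell$, and it hinges on choosing $e$ correctly. Since $\hati(C_r,C_\ell)=m(C_r)h(C_\ell)>0$ and $C_r$ can cross $C_\ell$ only left-to-right, such a crossing exists; take $e$ to be the edge of $C_r$ \emph{entering} that crossing, so that $e$ is left-incoming to $C_\ell$ and $e\notin C_\ell$. Any cycle through this $e$ reaches $C_\ell$ from the left and, because $C_\ell$ is left-first, can only ever leave $C_\ell$ to the right; hence it crosses $C_\ell$ left-to-right at the head of $e$ and never right-to-left, so its algebraic crossing number with $C_\ell$ is strictly positive. That is incompatible with the cycle and $C_\ell$ both being 1-orbital (two parallel cycles have algebraic crossing number $0$), so this particular $e$ lies on no 1-orbital cycle, contradicting orbitality. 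Everything else in your proposal --- the $i$-\upward claims via \cref{clm:orbitalImpliesUpward}, the derivation of $m(C_\ell)\le 0\le m(C_r)$, and the entire reverse direction including the sign analysis of $h(D_e)$ and the case split on $m(D_e)$ feeding into \cref{clm:combined} --- coincides with the paper's argument.
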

\begin{proof}
First assume that $\cL$ is orbital. 
By \cref{clm:orbitalImpliesUpward} both~$C_\ell$ and~$C_r$ are~$i$-\upward, 
so we must only prove that they are~$i$-\enclosing,
i.e.\ (assuming again~$i = 1$) that~$m(C_\ell) \leq 0 \leq m(C_r)$ with either none or both inequalities tight.
Since~$\cL$ is orbital and has at least one {\bl blue} edge, 
we have a {\bl 1-orbital} cycle~$\bl C_o$ in the blue graph~$\blG$, so~$m({\bl C_o}) = 0$.  
If~$\bl C_o$ is parallel to~$C_\ell$ then~$m(C_\ell) = m({\bl C_o}) = 0$.   
Otherwise,~$\bl C_o$ crosses~$C_\ell$,
necessarily only left-to-right since~$C_\ell$ is a left-first cycle, 
and so~$0 < \hati({\bl C_o}, C_\ell) = m({\bl C_o})h(C_\ell) - m(C_\ell)h({\bl C_o}) = -m(C_\ell)$.   
Either way $m(C_\ell) \leq 0$, and symmetrically one argues that~$m(C_r) \geq 0$ since~$\bl C_o$ crosses it only right-to-left.

So~$m(C_\ell) \leq 0 \leq m(C_r)$, and we are done unless exactly one of the inequalities is tight.
Assume for contradiction that~$m(C_\ell) = 0 < m(C_r)$ (the other case is symmetric).    
Then~$C_r$ is not parallel to~$C_\ell$, cannot be reverse-parallel to it, 
and therefore must cross~$C_\ell$, necessarily left-to-right.
Let~$e$ be an edge of~$C_r$ incoming at such a crossing, i.e.\ 
the head of~$e$ belongs to both~$C_\ell$ and~$C_r$, but~$e \not\in C_\ell$.
Since the crossing is left-to-right,~$e$ is left incoming to~$C_\ell$. 
Since~$\cL$ is~{\bl 1-orbital}, there exists a {\bl 1-orbital} cycle~$\bl C_e$ through~$e$ in the blue graph~$\blG$. 
Cycle~$\bl C_e$ is on~$C_\ell$ at the head of~$e$, may run along it for a while, 
but eventually must leave it (since~$e \not\in C_\ell$), 
and it can leave~$C_\ell$ only to the right.  
Therefore cycle~$\bl C_e$ crosses~$C_\ell$ left-to-right at the head of~$e$,
and it cannot ever cross it right-to-left.   
So~$\hati({\bl C_e}, C_\ell) > 0$, which contradicts that both~$\bl C_e$ and~$C_\ell$ are~{\bl 1-orbital} by $m(C_\ell) = 0$. 

\medskip
For the other direction, we again only consider $i=1$ and assume that 
we are given a left-first cycle~$C_\ell$ 
and right-first cycle~$C_r$ in~$L_i(G)$ that are~$i$-\upward ($h(C_\ell), h(C_r) > 0$) 
and~$i$-\enclosing, and we show that $\cL$ is then $i$-orbital.
Fix an arbitrary edge~$e \in L_i(G)$; we want to find an~$i$-orbital cycle through~$e$ in~$L_i(G)$.
Let~$C_e$ be some cycle through~$e$ in~$L_i(G)$, which exists since~$\cL$ is realizable.

If $m(C_\ell) = 0 = m(C_r)$ then $C_\ell$ and $C_r$ are 1-orbital
and $C_e$ is parallel to both by \cref{clm:LeftRightProperties}(\ref{clm:leftRight:parallelBoth}); 
so $C_e$ itself is 1-orbital and we are done.
So we may assume that~$m(C_\ell) < 0 < m(C_r)$.   
We first show $C_e$ is 1-\upward.  
This clearly holds if~$C_e$ is parallel with~$C_r$ or $C_\ell$.
Furthermore, $C_e$ cannot be reverse-parallel with either, say $C_r$,
as this would force it to cross $C_\ell$ right-to-left.
So assume that $C_e$ crosses both.
If~$m(C_e) \geq 0$, then (since any crossing of~$C_e$ with~$C_r$ is right-to-left)
we have~$0 > \hati(C_e, C_r) = m(C_e)h(C_r) - m(C_r)h(C_e)$ or~$h(C_e) > m(C_e)h(C_r) / m(C_r) \geq 0$.  
Similarly~$m(C_e) \leq 0$ implies $h(C_e) > 0$ since $\hati(C_e,C_\ell) > 0$. So $C_e$ is 1-\upward.
If~$m(C_e) = 0$, then this makes~$C_e$ itself 1-orbital and we are done.
If~$m(C_e) > 0$ [$m(C_e) < 0$], then~$C_\ell$ and~$C_e$ [$C_e$ and~$C_r$] satisfy all preconditions for \cref{clm:combined}, 
and using it we can find an~$i$-orbital cycle that contains edge~$e$.
\end{proof}

So we can read whether $\cL$ is orbital from the algebraic crossing numbers of ${\bl M}, {\re H}$ 
with the left-first and right-first cycles in $L_i(G)$ for $i\in [2]$,
and have:

\begin{theorem} \label{clm:RDflat:linear}
  Given a PTT graph $G$ on the flat torus $\bT$,
  it can be decided in linear time whether a REL of~$G$ is unrealizable, orbital or slanted.
\end{theorem}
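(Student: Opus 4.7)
The plan is to reduce the classification directly to the realizability test already sketched in the text and to Proposition~\ref{clm:orbitalCharacterization}. First I would test realizability by computing the strongly connected components of $\blG$ and $\reG$ using Tarjan's algorithm~\cite{Tar72} in linear time; an edge is lonely iff its endpoints lie in distinct SCCs. If a lonely edge exists, return ``unrealizable''.

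Otherwise, by Proposition~\ref{clm:orbitalCharacterization}, $\cL$ is orbital iff, for each $i \in [2]$, a left-first cycle $C_\ell$ and a right-first cycle $C_r$ of $L_i(G)$ are both $i$-\upward and jointly $i$-\enclosing. To find a left-first cycle I would pick an arbitrary vertex of $L_i(G)$ and repeatedly follow the leftmost outgoing edge; since realizability guarantees at least one outgoing edge at every vertex and the walk is deterministic, some vertex is revisited within $|V(L_i(G))|$ steps, and the closed segment between the two visits is a left-first cycle (each of its vertices has its leftmost outgoing edge in the segment). A right-first cycle is obtained symmetrically. To evaluate the two predicates, I would compute the homotopy class $(m(C), h(C))$ for each such cycle $C$ by walking along $C$ inside the planarization $G^\times$ and tallying signed crossings with the meridian $\bl M$ and the horizon $\re H$: each such crossing is already encoded as a dummy vertex of $G^\times$, and its sign can be read off locally from the cyclic order of incident edges together with the orientation of the cycle and the orbit. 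The four homotopy classes, compared with $(0,1)$ and $(1,0)$, determine the $i$-\upward and $i$-\enclosing predicates in constant time; if both predicates hold for $i=1$ and $i=2$ the REL is orbital, otherwise (since it is realizable) it is slanted.

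The main thing to verify is that every substep is truly linear in $|G^\times|$ rather than in some larger quantity. Tarjan's algorithm runs in time linear in $|\blG| + |\reG| \leq |G^\times|$; the greedy left-first and right-first searches each follow a deterministic walk that terminates the first time a vertex repeats, so they touch each edge of $L_i(G)$ at most once; and the homotopy-class computation for the (at most four) resulting cycles traverses them in $G^\times$ with constant work per dummy vertex. The minor obstacle is arguing correctness of the greedy searches in the presence of loops and parallel edges, but a loop that is the leftmost outgoing edge is itself a valid (length-one) left-first cycle, and parallel edges do not affect the ``leftmost'' choice once the rotation system is fixed. Assembling the three parts yields the claimed linear-time algorithm whose output is dictated by the lonely-edge check together with Proposition~\ref{clm:orbitalCharacterization}.
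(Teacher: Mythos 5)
Your proposal is correct and follows essentially the same route as the paper: test realizability via strongly connected components of $\blG$ and $\reG$, then compute a left-first and a right-first cycle in each $L_i(G)$ together with their homotopy classes (via signed crossings with $\bl M$ and $\re H$ in the planarization), and invoke \cref{clm:orbitalCharacterization} to distinguish orbital from slanted. The extra details you supply (the greedy leftmost-edge walk terminating at the first repeated vertex, and reading crossing signs at dummy vertices of $G^\times$) are exactly the graph traversals the paper leaves implicit.
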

\begin{proof}
First test whether $\cL$ is realizable, then for $i \in [2]$ compute some left-first and
right-first cycles $C_\ell,C_r$ in $L_i(G)$ and their homotopy classes.
From this, we can immediately read whether they are $i$-\upward and $i$-\enclosing.
By \cref{clm:orbitalCharacterization} we therefore know whether $\cL$ is orbital.
If $\cL$ is neither unrealizable nor orbital then it is slanted.
\end{proof}

We want to point out that for \cref{clm:orbitalCharacterization}
(and specifically \cref{clm:orbitalImpliesUpward})
it is required that the REL is orbital for \emph{both}~$i = 1$ and~$i = 2$.
\cref{fig:torus:notSufficientForOrbital} shows a realizable REL that is 1-orbital but not~$1$-\upward. 

\begin{figure}[tbh]
  \centering
  \includegraphics[page=1]{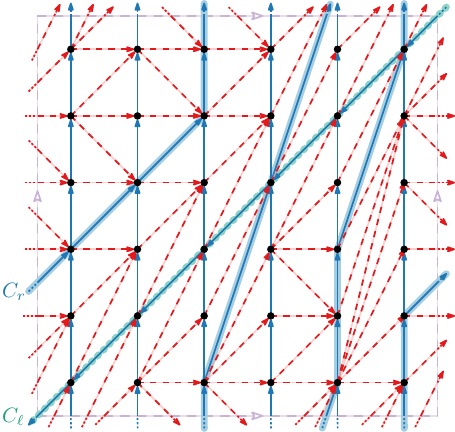}
  \caption{This toroidal REL is {\bl 1-orbital} but not {\re 2-orbital}.
  One can verify that every {\bl blue} edge is {\bl 1-orbital}
	(use vertical paths for vertical edges, and combine vertical paths
	with exactly one edge each of ${\cy C_\ell}$ and ${\bl C_r}$ for edges within
	${\cy C_\ell}/{\bl C_r}$).
  But the left-first cycle $\cy C_\ell$ of $\blG$ goes downward and leftward, so it has homotopy-class~$({-}1,{-}1)$  
  and is not 1-\upward.}
  \label{fig:torus:notSufficientForOrbital}
\end{figure}

\subsection{From Realizable REL to Toroidal Rectangular Dual} 
\label{sec:toroidalRD}
We now show that what we defined as \textit{realizable RELs} is indeed a characterization,
that is, they are exactly those RELs 
that can arise from a toroidal rectangular duals.
That realizability is necessary is easy to show:
If graph~$G$ has a toroidal rectangular dual~$\R$,
then we can for each edge~$e$ in~$L_i(G)$ easily extract a cycle in $L_i(G)$ containing~$e$.
Simply start on a line perpendicular to a contact that represents $e$ and see where it leads~in~$\R$.

\begin{lemma} \label{clm:necessary}
  If a PTT graph~$G$ has a rectangular dual~$\R$ on the flat torus~$Q$, 
  then the corresponding REL~$\cL$ is realizable.
  If $Q$ is a rectangle, then $\cL$ is orbital. 
\end{lemma}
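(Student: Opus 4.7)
The plan is to produce, for every edge $e \in L_i(G)$, a directed cycle of $L_i(G)$ through $e$ by tracing a straight ray in the appropriate axis direction. By symmetry it suffices to consider $i = 1$: then $e = (u, v)$ is blue and $\R(u), \R(v)$ share a horizontal contact segment $S$ with $\R(u)$ below $\R(v)$. I would pick a generic interior point $p \in S$ whose upward vertical ray in $Q$ avoids every rectangle corner (all but finitely many choices of $p$ work), lift $p$ to a point $\tilde p$ in the universal cover of $Q$, and follow the upward vertical ray. This ray enters the lifted copy of $\R(v)$, exits through its top edge into the unique rectangle above, and so on; projecting back to $Q$ this yields an infinite walk $R_1 = \R(v), R_2, R_3, \ldots$ in $\blG$ whose successive edges are exactly the blue contacts crossed by the ray.

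Two cases are needed, distinguished by whether the lattice $\Lambda$ defining $Q$ contains a vertical vector $(0,h)$. In the first case the ray is $h$-periodic, so it closes up into a vertical loop on $Q$ meeting $p$ again from below; the last rectangle visited before the closing must be $\R(u)$, so the closing blue edge is $e$ itself, and the walk $R_1, R_2, \ldots, R_{k-1} = \R(u), R_1$ is a closed walk through $e$. In the second case the parallel vertical flow on $Q$ is irrational and therefore ergodic, so the projected orbit of $\tilde p$ is dense on $Q$. Density forces the ray to enter the interior of $\R(u)$, hence $R_k = \R(u)$ for some $k$; concatenating the walk $R_1, \ldots, R_k$ with the edge $e$ again gives a closed walk through $e$. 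From any closed walk through $e$ a simple directed cycle through $e$ can be extracted in the usual way. A symmetric argument with horizontal rays handles red edges, which establishes realizability.

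When $Q$ is rectangular, $\Lambda$ contains both a vertical and a horizontal lattice vector, so the first case applies in both axis directions. The cycle obtained through a blue edge is then a vertical loop on $Q$, homotopic to the vertical pair of identified sides of $Q$; through a red edge it is a horizontal loop homotopic to the horizontal pair. Since $\R$ is a dual of the given embedding of $G$ on $\bT$, the induced embedding on $Q$ is combinatorially the same as the one on $\bT$, and under this identification the vertical and horizontal pairs of identified sides of $Q$ correspond to $\bl M$ and $\re H$ on $\bT$. Hence these loops carry homotopy classes $(0,1)$ and $(1,0)$ on $\bT$, so every blue edge sits on a {\bl 1-orbital} cycle and every red edge on a {\re 2-orbital} cycle, which makes $\cL$ orbital.

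The hard part will be the slanted subcase where $\Lambda$ has no vertical vector: one has to convert topological density of the irrational vertical flow into the discrete assertion that $R_k = \R(u)$ for some $k$. The small continuity step involved is precisely why I would first choose $p$ so that the ray meets no corner of~$\R$. A further subtlety in the orbital part is justifying that the two pairs of identified sides of the rectangular $Q$ correspond to $\bl M$ and $\re H$ on~$\bT$ under the combinatorial equivalence of embeddings; this relies on $\R$ realizing the prescribed embedding rather than one obtained from it by a Dehn twist.
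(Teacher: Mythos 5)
Your proposal is correct and takes essentially the same approach as the paper: trace a vertical (resp.\ horizontal) line through the universal cover of $\R$ to obtain a closed directed walk in \blG (resp.\ \reG) through $e$, extract a cycle from it, and in the rectangular case note that the resulting cycle is carried by a vertical loop and is therefore meridian-parallel. Your explicit case split between a lattice containing a vertical vector and the irrational-flow case merely makes precise the step the paper compresses into the assertion that periodicity of $\tilde{\R}$ forces the vertical line to meet another copy of the segment representing~$e$.
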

\begin{proof}
Consider the universal covering plane of $Q$, i.e., fill the plane with copies of $Q$.
By pasting~$\R$ into each copy of~$Q$,  we can create a periodic rectangular dual~$\tilde{\R}$ of the cover graph $\tilde{G}$.
We only show that every {\bl blue} edge~$e$ belongs to a cycle $C_e$ (and~$C_e$ can be chosen {\bl 1-orbital} if $Q$ is a rectangle);
the proof for {\re red} edges works analogously.
Edge~$e$ is represented by a horizontal segment~$s_e$ in~$\R$; 
let~$\ell_e$ be a vertical line in~$\tilde{\R}$ that intersects one copy~$\tilde{s_e}$ of~$s_e$ 
and that does not intersect any left or right side of a rectangle.
Since~$\tilde{\R}$ is periodic, line~$\ell_e$ must intersect another copy~$\hat{s}_e$ of~$s_e$ above~$\tilde{s}_e$.
Line~$\ell_e$ never runs along a vertical side of a rectangle, so intersects only horizontal segments (representing edges in~$\blG$).
So walking along~$\ell_e$ from~$\tilde{s_e}$ to~$\hat{s_e}$
defines a closed walk in~$\blG$ that contains~$e$. 
Within this closed walk, we can find a cycle~$\bl C_e$ that contains~$e$, so~$e$ is not lonely.

If~$Q$ is an axis-aligned rectangle, then covering curves of the meridian~${\bl M}$ correspond to vertical lines.
Cycle~$\bl C_e$ was derived from the vertical line~$\ell_e$,
hence its covering curves can be made vertical in a suitable drawing of $G$ that overlays $\R$.
Therefore, $\bl C_e$ is parallel to $M$ and hence orbital as~desired.
\end{proof}

The difficulty lies in showing that this necessary condition is also sufficient. 
We first give an outline.
Given a PTT graph~$G$ and a realizable REL~$({\bl L_1}, {\re L_2})$ of~$G$,
the idea is to adapt He's algorithm~\cite{He93}, 
which is based on the observation that 
the faces of~$\blG$ [$\reG$] are in bijection with the maximal vertical [horizontal] segments in~$\R$; see~\cref{fig:torus:rel}.
To compute the coordinates of these segments,
the algorithm splits (for $i \in [2]$) the outer face of $L_i(G)$ in two;
in a planar graph the dual of $L_i(G)$ then becomes acyclic and coordinates 
can be read from longest paths in these dual graphs.
The main obstacle in the toroidal case is that the dual of $L_i(G)$ is not so easily made acyclic.
We instead split the entire graph $L_i(G)$ along some weakly-simple closed walk $C$ 
such that the result (a cylindrical graph denoted $L_i(G) \para C$) has an acyclic dual.
For this, $C$ must cross every directed cycle in $L_i^\star(G)$; 
we therefore call it a \emph{feedback closed walk}.
We can then extend He's algorithm to compute the coordinates.

\begin{figure}[tbh]
  \centering
  \begin{minipage}[t]{0.45 \linewidth}
	\centering
   	\includegraphics[page=1]{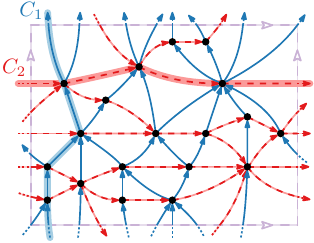}
	\subcaption{PTT graph $G$ with orbital REL $\set{{\bl L_1}, {\re L_2}}$}
	\label{fig:torus:rel:g}
  \end{minipage}
  \qquad
  \begin{minipage}[t]{0.45 \linewidth}
	\centering
    \includegraphics[page=2]{torusRELV1}
	\subcaption{$\bl L_1 \para C_1$ and $\og L_1^\star(G)$}
	\label{fig:torus:rel:blue}
  \end{minipage}
  
  \vspace{0.5cm}
  
  \begin{minipage}[t]{0.45 \linewidth}
	\centering
   	\includegraphics[page=3]{torusRELV1}
	\subcaption{$\re L_2 \para C_2$ and $\gr L_2^\star(G)$}
	\label{fig:torus:rel:red}
  \end{minipage}
  \qquad
  \begin{minipage}[t]{0.45 \linewidth}
	\centering
    \includegraphics[page=4]{torusRELV1}
	\subcaption{Rectangular dual $\R$}
	\label{fig:torus:rel:recdual}
  \end{minipage}
  \caption{Given an orbital REL, we split $\blG$ and $\reG$ along orbital feedback cycles
  and then use He's algorithm to obtain a rectangular dual $\R$ on the flat rectangular~torus.}
  \label{fig:torus:rel}
\end{figure}

\paragraph{Feedback closed walk.}
Fix $i \in [2]$. A \emph{feedback closed walk} is a non-contractible weakly-simple closed walk~$C$ of $L_i(G)$ that
crosses all cycles in the dual of $L_i(G)$ and that visits each vertex of $L_i(G)$ at most twice.  
Define graph $L_i(G) \para C$ by \emph{splitting $L_i(G)$ along~$C$} as follows:

\textit{Case 1 -- $C$ is a cycle:}
Duplicate~$C$ into two copies~$C_s$ and~$C_t$;
for a vertex~$v \in V(C)$, let the copies be~$v_s$ and~$v_t$, respectively.
Distribute each edge incident to~$v$ but not on~$C$ accordingly to~$v_s$ and~$v_t$
based on whether it attaches to~$C$ at~$v$ from the right or the left, respectively.

\textit{Case 2 -- $C$ is weakly simple:}
Let $C[u,v]$ be a subpath of $C$ that visited twice; see~\cref{fig:torus:single:g}.
As before, duplicate $C$ into two copies~$C_s$ and~$C_t$ and distribute edges accordingly,
which results in three copies of each vertex on $C[u,v]$ instead of four copies; see~\cref{fig:torus:single:split}.
(Imaging cutting through a drawing of $L_i(G)$ with edges of non-zero thickness.
Then the cut moves through $C[u,v]$ twice, creating three copies.)  

Either way, graph $L_i(G) \para C$ is obtained by splitting a toroidal graph along a non-contractible curve, 
so it is cylindrical with its two outer faces incident to $C$
(the two outer faces share subpaths whenever $C$ visits such a path twice).
So if $C$ crosses all directed cycles of $L_i^\star(G)$, then the dual of~$L_i(G)\para C$ is acyclic.  
So a crucial step is how to find a feedback closed walk, preferably a simple and orbital~one.

\begin{figure}[tbh]
  \centering
  \begin{minipage}[t]{0.4 \linewidth}
	\centering
   	\includegraphics[page=1]{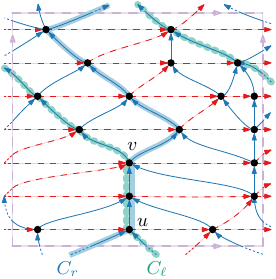}
	\subcaption{PTT graph~$G$ with~$\set{{\bl L_1}, {\re L_2}}$}
	\label{fig:torus:single:g}
  \end{minipage}
  \hfill
  \begin{minipage}[t]{0.55 \linewidth}
	\centering
    \includegraphics[page=2]{singleSelfTouching}
 	\subcaption{$\bl\blG \para C_1$ and~$\og L_1^\star(G)$}
 	\label{fig:torus:single:split}
  \end{minipage}
  \caption{The feedback cycle $C_1$ (obtained from $C_\ell$ and $C_r$) of this slanted REL $\set{{\bl L_1}, {\re L_2}}$ is weakly simple.
  Hence, $\bl\blG \para C_1$ contains four copies of the vertices on the path $C_1[u, v]$ shared by $C_\ell$ and $C_r$.}
  \label{fig:torus:single}
\end{figure}

\begin{lemma} \label{clm:feedbackCycle}
  Let~$\cL$ be a realizable REL of a PTT graph~$G$. 
  For~$i \in [2]$, a feedback closed walk~$C$ in~$L_i(G)$ exists and can be found in linear time.
  Moreover, if~$\cL$ is~orbital, then~$C$ can be chosen $i$-orbital and simple.
\end{lemma}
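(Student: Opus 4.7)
The plan is to build $C$ from a left-first cycle $C_\ell$ and a right-first cycle $C_r$ of $L_i(G)$, both computable in linear time via greedy graph traversal (at each vertex picking the leftmost, resp.\ rightmost, outgoing edge until revisiting a vertex produces the cycle). By \cref{clm:LeftRightProperties}(1), no cycle of $L_i(G)$ crosses $C_\ell$ right-to-left or $C_r$ left-to-right, so $\hati(C_r, C_\ell) \geq 0$; equality holds precisely when $C_\ell$ and $C_r$ are parallel, since a realizable REL admits no reverse-parallel cycles by \cref{clm:reverseparallel}. This splits the argument into two cases.

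In the \emph{parallel} case, \cref{clm:LeftRightProperties}(3) forces every cycle of $L_i(G)$ to be parallel to $C_\ell$, and I would set $C := C_\ell$, a non-contractible simple cycle visited once per vertex. If moreover $\cL$ is orbital, then \cref{clm:orbitalCharacterization} gives $m(C_\ell) = 0 = m(C_r)$, so $C_\ell$ is already $i$-orbital. In the \emph{non-parallel} case $\hati(C_r, C_\ell) \geq 1$, I would invoke \cref{clm:combined} with some edge $e \in C_\ell$ to obtain a non-contractible weakly-simple closed walk $C$ containing $e$ that algebraically crosses both $C_\ell$ and $C_r$; by the clauses of \cref{clm:combined}, $C$ is simple if $\hati(C_r, C_\ell) \geq 2$, and if $\cL$ is orbital then \cref{clm:orbitalCharacterization} provides $m(C_\ell) < 0 < m(C_r)$ and $h(C_\ell), h(C_r) > 0$, invoking the orbital sub-case of \cref{clm:combined} to deliver $C$ simple and $i$-orbital.

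The main obstacle is verifying the \emph{feedback} property: that $C$ crosses every directed cycle $D^*$ in $L_i^\star(G)$. Two key facts underlie the argument. First, dual edges cross primal edges left-to-right, so every crossing of $D^*$ with any cycle $C'$ of $L_i(G)$ is left-to-right, giving $\hati(D^*, C') \geq 0$, with strict inequality whenever they share a primal-dual edge. Second, $D^*$ must cross some primal edge $e'$, which by realizability lies on a cycle $C_{e'}$ of $L_i(G)$, so $\hati(D^*, C_{e'}) \geq 1$. In the parallel case, $C_{e'}$ is parallel to $C_\ell = C$, so $\hati(D^*, C) = \hati(D^*, C_{e'}) > 0$. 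In the non-parallel case, I claim that $\hati(D^*, C_\ell) > 0$ or $\hati(D^*, C_r) > 0$: otherwise $D^*$ would not cross either of them (by the first fact), forcing its homotopy class to be parallel or reverse-parallel to each of $C_\ell, C_r$; since $C_\ell$ and $C_r$ are themselves non-parallel, they would then have to be reverse-parallel to each other, contradicting \cref{clm:reverseparallel}. Combining this with the homotopy identity $\hati(D^*, C) \cdot \hati(C_r, C_\ell) = x_\ell\, \hati(D^*, C_\ell) + x_r\, \hati(D^*, C_r)$, where $x_\ell, x_r \geq 1$ are the segment counts chosen in \cref{clm:combined}, forces $\hati(D^*, C) \geq 1$, so $C$ crosses $D^*$ as required. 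Linearity of the overall runtime then follows from the linear-time constructions of $C_\ell$, $C_r$, and of \cref{clm:combined}.
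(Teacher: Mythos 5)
Your proposal is correct and follows the paper's proof essentially step for step: the same left-first/right-first cycles, the same parallel versus non-parallel case split resolved via \cref{clm:reverseparallel}, \cref{clm:LeftRightProperties}, and \cref{clm:combined}, and the same two facts about a dual cycle $D$ (that $\hati(D,C')\ge 0$ for every cycle $C'$ of $L_i(G)$, and that $\hati(D,C_{e'})>0$ for a cycle through an edge crossed by $D$). The only local difference is the final inference in the non-parallel case: the paper rules out $C$ being parallel or reverse-parallel to $D$ using that $\hati(C,C_\ell)$ and $\hati(C,C_r)$ have opposite signs, whereas you compute $\hati(D,C)$ by bilinearity from the segment counts $x_\ell,x_r$ internal to the construction of \cref{clm:combined}; both arguments are valid.
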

\begin{proof}
Fix~$i \in [2]$, and let $D$ be an arbitrary cycle in the dual of~$L_i(G)$; see \cref{fig:feedbackcycle:D}.
We now choose $C$ (independently of $D$) and then argue that $C$ crosses $D$.
For this, we need two observations about $D$.   
First, $\hati(D, C') \geq 0$ for \textit{any} cycle $C'$ in~$L_i(G)$ since dual edges are directed to cross only left-to-right over edges of~$L_i(G)$.
Second, there exists \textit{some} cycle $C_D$ in~$L_i(G)$ with $\hati(D, C_D) > 0$; for example we can take any edge $e^\star$ in $D$,
its corresponding edge $e$ in~$L_i(G)$, and let $C_D$ be a cycle in~$L_i(G)$ through $e$ (this exists since $\cL$ is realizable).

\begin{figure}[t]
  \centering
  \begin{minipage}[t]{0.4 \linewidth}
	\centering
   	\includegraphics[page=5]{torusRELV1}
	\subcaption{A cycle $\og D$ in $\og L_1^\star(G)$ and a cycle $\br C_D$ in $\blG$ that crosses it.}
	\label{fig:feedbackcycle:D}
  \end{minipage}
  \qquad
  \begin{minipage}[t]{0.5 \linewidth}
	\centering
    \includegraphics[page=6]{torusRELV1}
	\subcaption{The left-first and right-first cycles $C_\ell$ and $C_r$ have $\hati (C_\ell,C_r)\neq 0$;
	the resulting feedback closed walk is $\bl C$.}
	\label{fig:feedbackcycle:C}
  \end{minipage}
  \caption{Constructing a feedback closed walk $\bl C$ for $\blG$ in the proof of \cref{clm:feedbackCycle}.}
  \label{fig:feedbackcycle}
\end{figure}

Now fix an arbitrary left-first cycle~$C_\ell$ and right-first cycle~$C_r$ in~$L_i(G)$ and consider two cases; see \cref{fig:feedbackcycle:C}.
First, if $\hati(C_\ell, C_r) = 0$, then $C_\ell,C_r$ are parallel by \cref{clm:reverseparallel} 
and $C_D$ is parallel to~$C_r$ by \cref{clm:LeftRightProperties}(\ref{clm:leftRight:parallelBoth}). 
Therefore, $\hati(D, C_r) = \hati(D, C_D) > 0$ and~$C_r$ crosses $D$, and we can use $C_r$ as (simple) feedback cycle.
Second, suppose $\hati(C_\ell, C_r) \neq 0$.
By \cref{clm:LeftRightProperties}(\ref{clm:leftRight:howtocross}) cycle $C_\ell$ crosses $C_r$ only right-to-left, 
so use \cref{clm:combined} to obtain a closed walk $C$ that crosses both $C_\ell$ and~$C_r$ algebraically.
In fact, $\hati(C, C_\ell) < 0 < \hati(C, C_r)$ by \cref{clm:LeftRightProperties}(\ref{clm:leftRight:howtocross}).   
If $C$ were parallel with~$D$, then~$\hati(D, C_\ell) = \hati(C, C_\ell) < 0$; impossible.
If $C$ were reverse-parallel with~$D$, then $\hati(D, C_r) = -\hati(C, C_r) < 0$; impossible.
So $C$ must cross $D$, and all other properties of feedback closed walks hold for $C$ by \cref{clm:combined}.  
\end{proof}

\paragraph{Computation of Coordinates.}
For an orbital REL, the computation of coordinates is now nearly identical 
to the algorithm by He~\cite{He93}, after splitting the~graph.

\begin{theorem} \label{clm:torus:relToRecDual}\label{clm:torus:orbital}
  For a PTT graph~$G$ on $\bT$ with an orbital REL~$\cL$, 
  a toroidal rectangular dual~$\R$ of~$G$ that realizes~$\cL$ 
  and lies on a rectangular flat torus can be computed in linear time.
\end{theorem}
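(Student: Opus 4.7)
The plan is to adapt He's linear-time planar algorithm to the torus by cutting $\blG$ and $\reG$ along orbital feedback closed walks, solving the resulting longest-path problems on cylindrical graphs with acyclic duals, and reassembling the result periodically on a rectangular flat torus. The strategy is the one outlined just before the theorem and illustrated in~\cref{fig:torus:rel}.

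First I would invoke~\cref{clm:feedbackCycle} to obtain, in linear time, a simple {\bl 1-orbital} feedback closed walk $C_1$ in $\blG$ and a simple {\re 2-orbital} feedback closed walk $C_2$ in $\reG$. Since each $C_i$ is simple, splitting along $C_i$ falls into Case~1 of the split construction, so $L_i(G)\para C_i$ is a cylindrical graph whose two outer faces are bounded by two copies of $C_i$. Because $C_i$ crosses every cycle in the dual of $L_i(G)$, the dual of $L_i(G)\para C_i$ is acyclic. I would then run He's construction: compute x-coordinates from longest-path distances in the dual of $\blG\para C_1$ and y-coordinates from longest-path distances in the dual of $\reG\para C_2$, and assemble each rectangle $\R(v)$ from the extremal incoming/outgoing blue and red edges at $v$. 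The REL property ensures that $\R(v)$ is well-formed and that no four rectangles share a corner.

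The main obstacle is checking that the coordinates glue consistently, so that the representation lives on a \emph{rectangular} flat torus rather than only on a cylinder. Specifically, the two copies of $C_1$ in $\blG\para C_1$ must receive x-coordinates differing by a common value $W$ at every vertex of $C_1$, and the two copies of $C_2$ must differ by a common value $H$, so that identifying opposite sides of the rectangle $[0,W]\times[0,H]$ produces a rectangular flat torus on which $\R$ lives. I would argue this using {\bl 1-orbitality}: every vertex of $C_1$ lies on a {\re 2-orbital} red cycle that crosses $C_1$ exactly once, so the combinatorial horizontal distance separating the two copies of $C_1$ is identical at every vertex of $C_1$; equivalently, the longest-path distance in the dual of $\blG\para C_1$ between corresponding copies of any vertex of $C_1$ is constant. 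The symmetric argument for $C_2$ completes the consistency check, and $\R$ realizes $\cL$ by construction because the contact orientations follow the REL-induced edge orientations used in the split. Since finding $C_1,C_2$, splitting, and longest-path computations on DAGs all run in linear time, the claimed bound follows.
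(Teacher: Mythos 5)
Your overall plan coincides with the paper's: obtain simple $i$-orbital feedback cycles via \cref{clm:feedbackCycle}, split $L_i(G)$ along them, compute consistent numberings in the (now acyclic) duals, and read off coordinates as in He's algorithm. The gap is in the step you yourself flag as the main obstacle. The condition you propose to verify --- that the two copies of $C_1$ receive x-coordinates ``differing by a common value $W$ at every vertex'' --- is not the condition the construction needs, and your argument for it does not work: the fact that each vertex of $C_1$ lies on a {\re 2-orbital} red cycle crossing $C_1$ exactly once says nothing about the \emph{lengths} of the longest dual paths between the two copies, and these lengths are in general different for different vertices of $C_1$ (and different dual faces). Nothing forces them to be constant, and nothing in the construction requires them to be.

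What actually has to be checked is different and weaker. The torus width is simply declared to be $W = {\bl d_1}(t_1)-1$, and a vertex $v$ on $C_1$ gets a single wrap-around rectangle whose \emph{left} x-coordinate is $d_1(\mathrm{left}_1(v_t))$ (taken from the sink-side copy) and whose \emph{right} x-coordinate is $d_1(\mathrm{right}_1(v_s))$ (taken from the source-side copy). The only thing that can go wrong is that this wrapped rectangle is degenerate or overlaps itself, so one must show $d_1(\mathrm{right}_1(v_s)) < d_1(\mathrm{left}_1(v_t))$. This is where orbitality of the \emph{other} color enters: the {\re 2-orbital} red cycle through $v$ crosses the {\bl 1-orbital} $C_1$ exactly once, hence yields a directed {\re red} path from $v_s$ to $v_t$ in the split graph, which by the REL property induces a directed path in $L_1^\star(G)$ from $\mathrm{right}_1(v_s)$ to $\mathrm{left}_1(v_t)$; acyclicity of $L_1^\star(G)$ then gives the strict inequality. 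So you have the right ingredient (the single crossing of the transverse orbital cycle) but you draw the wrong conclusion from it; replace the ``constant offset'' claim by this inequality and the proof closes. The remaining alignment of rectangles is, as you say, He's argument verbatim.
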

\begin{proof}
We adapt He's algorithm; see also \cref{fig:torus:rel}. 
Let~$i \in [2]$.
First, compute an orbital feedback \textit{cycle}~$C_i$ of~$L_i(G)$ with \cref{clm:feedbackCycle},
the cylindrical graph~$L_i(G) \para C_i$, 
and let~$L_i^\star(G)$ be the dual of~$L_i(G) \para C_i$ with source~$s_i$ and sink~$t_i$
corresponding to the outer faces.   
Second, compute what He calls a \emph{consistent numbering} on~$L_i^\star(G)$:
for a face~$f$, set~$d_i(f)$ to be the length of a longest path from~$s_i$ to~$f$.

As the third step, use~$\bl d_1$ and~$\re d_2$ to obtain x- and y-coordinates.
Note that graph~$L_i(G) \para C_i$, $i \in [2]$, is bimodal,
that is at every vertex $v$ the incoming edges are consecutive around $v$, as are the outgoing edges, 
and both sets are non-empty by the REL property and since we split along a cycle.   
We let the \emph{faces to the left and right of~$v$} (denoted $\mathrm{left}_i(v)$ and~$\mathrm{right}_i(v)$)
be the faces where we transition from incoming to outgoing edges and vice versa.
For~$i = 1$ [$i = 2$] and~$v \not\in V(C_i)$,
the left and right x-coordinate [top and bottom y-coordinate] of~$\R(v)$
are given by~$d_i(\mathrm{left}_i(v))$ and~$d_i(\mathrm{right}_i(v))$, respectively.
For~$v \in V(C_i)$, use~$v_s$ and~$v_t$ for the two copies of~$v$ on the outer faces of~$L_i(G)\para C_i$.
Then the left and right x-coordinate [top and bottom y-coordinate] 
of~$\R(v)$ are given by~$d_i(\mathrm{left}_i(v_t))$ and~$d_i(\mathrm{right}_i(v_s))$, respectively.

As canonical parallelogram~$Q$ we use a rectangle with width~${\bl d_1}(t_1) - 1$ and height~${\re d_2}(t_2) - 1$,
and with the lower-left corner at~$0.5 \choose 0.5$.
Note that all computed coordinates thus lie in~$Q$; see \cref{fig:torus:rel:recdual}.
It remains to show that~$\R$ is indeed a rectangular dual.

Note that the left and right side of~$Q$ corresponds to~$\bl C_1$
and the top and bottom side of~$Q$ corresponds to~$\re C_2$.
For~$i = 1$ [$i = 2$] and~$v \in V(C_i)$, 
we have to show that the first x-coordinate [y-coordinate] is lower than the second.
Since~$\cL$ is orbital, there is a {\re red} [{\bl blue}] path~$P$ from~$v_s$ to~$v_t$ in~$\reG$ [$\blG$].
By the REL property, there is a path parallel to~$P$ in~$L_i^\star(G)$
that goes from~$\mathrm{right}_i(v_s)$ to~$\mathrm{left}_i(v_t)$.
As~$L_i^\star(G)$ is acyclic,~$d_i(\mathrm{right}_i(v_s)) < d_i(\mathrm{left}_i(v_t))$. 
That they align vertically [horizontally] follows, as with He, from~$\re d_{2}$ [$\bl d_1$]. 
Hence,~$\R$ is the desired rectangular dual
and, since~$\bl C_1$ is parallel to~$\bl M$ and~$\re C_2$ is parallel to~$\re H$,
we get that~$Q$ also respects the embedding of~$G$ on~$\bT$.
\end{proof}

For a slanted REL, computing rectangle-coordinates is significantly more complicated 
since the consistent numberings~$\bl d_1$ and~$\re d_2$ do not directly translate
to coordinates in the slanted flat torus.
By tracking how~$\bl M$ and~$\re H$ intersect the feedback cycle~$L_i(G)$ for $i\in [2]$,
we can piece together a rectangular dual that lies in a slanted flat torus corresponding to~$({\bl M}, {\re H})$. 

\begin{theorem} \label{clm:torus:slanted}
  For a PTT graph~$G$ on $\bT$ with a slanted REL~$\cL$,
  a toroidal rectangular dual~$\R$ of~$G$ that realizes~$\cL$ can be computed in quadratic time.
  The boundaries of the flat torus that defines~$\R$ correspond to curves that are parallel to $\bl M$ and $\re H$ of $\bT$.
\end{theorem}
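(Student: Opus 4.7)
The plan is to extend the framework of Theorem~\ref{clm:torus:orbital}---feedback closed walks, splitting along them, acyclic duals, and consistent numberings in the style of He~\cite{He93}---to the slanted setting. Two new difficulties arise: by Lemma~\ref{clm:feedbackCycle}, the feedback closed walks $C_1, C_2$ in $\blG, \reG$ need only be weakly simple (not necessarily simple) and need not be parallel to the orbits of $\bT$, so the fundamental parallelogram $Q$ of the output is no longer axis-aligned. Instead, $Q$ will be slanted, with sides that correspond (as curves on the torus) to $\bl M$ and $\re H$ rather than to $C_1, C_2$.

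As in the orbital case, I would first apply Lemma~\ref{clm:feedbackCycle} to obtain $C_1, C_2$ and build the cylindrical graphs $L_i(G) \para C_i$ whose duals $L_i^\star(G)$ are acyclic, with source $s_i$ and sink $t_i$. I then compute a consistent numbering $d_i$ via longest paths from $s_i$ and assign each rectangle $\R(v)$ its axis-aligned coordinates using $\bl d_1$ for the left/right x-coordinates and $\re d_2$ for the top/bottom y-coordinates, picking the appropriate copy of $v$ when $v$ lies on $C_i$ (which may have two or three copies when $C_i$ is weakly simple). Local consistency of rectangle contacts then follows exactly as in Theorem~\ref{clm:torus:orbital}, since that argument only relies on acyclicity of the duals and on the REL property.

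The hard part is determining the slanted parallelogram $Q$ so that its boundaries correspond to $\bl M$ and $\re H$ rather than to $C_1, C_2$. For this I would trace $\bl M$ and $\re H$ through the already-placed rectangle tiling using the planarization $G^\times$, which records exactly which edges of $G$ are crossed by $\bl M$ and $\re H$ and in which order; each such crossing pins the respective orbit to a known point on a specific contact segment of the rectangular dual. Summing the resulting plane displacements over one full period of $\bl M$ and of $\re H$ yields vectors $\vec{u}_M$ and $\vec{u}_H$ that together span the desired $Q$. Global consistency of the tiling under translations by $\vec{u}_M, \vec{u}_H$ then follows because those vectors were computed directly from the actual geometric positions of the encountered rectangles, so rectangles that meet across a side of $Q$ glue up correctly to their copies on the identified side. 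The quadratic running time reflects that $\bl M$ and $\re H$ can cross up to $\Theta(n)$ edges each, and that a weakly simple $C_i$ with $\Theta(n)$ self-touches requires extra bookkeeping when distributing coordinates to the multiple copies of a vertex on the split.
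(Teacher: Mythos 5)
Your overall strategy matches the paper's: reuse the feedback closed walks and consistent numberings from the orbital case, then trace $\bl M$ and $\re H$ through the resulting drawing to determine a slanted parallelogram $Q$ (your displacement vectors $\vec{u}_M,\vec{u}_H$ are exactly the paper's corner points $p_1,\dots,p_4$). However, there is a genuine gap in the middle of your argument: you assign each rectangle its coordinates \emph{directly} from $\bl d_1$ and $\re d_2$ and then speak of ``the already-placed rectangle tiling,'' but no such global placement exists yet. The numberings $d_i$ only determine positions \emph{relative to} the cuts $C_1,C_2$; since in the slanted case $C_1$ and $C_2$ are not parallel to $\bl M$ and $\re H$, a vertex near the cuts appears in several translated copies in the universal cover, and which copy lies inside $Q$ (equivalently, which integer multiple of $(d_1(t_1){-}1, d_2(t_2){-}1)$ must be added to its raw coordinates) is precisely the information your construction never determines. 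The paper supplies this via an intermediate step you are missing: $C_1$ and $C_2$ dissect $G$ into \emph{patches}, each a planar subgraph of the cover graph; each patch gets its own planar rectangular dual (a $(d_2(t_2){-}1)\times(d_1(t_1){-}1)$ rectangle, using the \emph{global} numberings so that adjacent patches agree on shared boundary vertices), and these are assembled into a periodic infinite dual $\tilde{\R}$ of the cover graph. Only then can $\bl M$ and $\re H$ be traced through $\tilde{\R}$ and $Q\cap\tilde{\R}$ be taken as the output. Relatedly, your claim that local consistency ``follows exactly as in \cref{clm:torus:orbital}'' is not quite right: the step there showing that the two coordinates of a vertex on $C_i$ are correctly ordered uses orbitality of $\cL$; in the slanted case correctness is instead argued per patch (where He's planar argument applies) plus the gluing of adjacent patches.

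A second, smaller issue is the running time. The quadratic bound does not come from bookkeeping for weakly simple $C_i$ (splitting along $C_i$ is linear); it comes from the cut-out step: one must bound the number of tiles of the periodic tiling that intersect the slanted $Q$ by $\Oh(n)$ and pay $\Oh(\abs{V(G)})$ per intersected tile to translate and clip the corresponding patch dual, giving $\Oh(n\abs{V(G)})$. Without the patch/tile decomposition this accounting cannot even be set up, so fixing the gap above is also what makes the complexity claim provable.
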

\begin{proof}
A full example illustrating the construction is given in \cref{sec:example}.
Exactly as in the proof of \cref{clm:torus:orbital},
we compute, for~$i \in [2]$, a feedback closed walk~$C_i$ (see \cref{fig:example:feedback})
and the consistent numberings~$d_i(\cdot)$ in~$L_i^\star(G)\para C_i$.
Note that $C_1$ induces a cycle in the dual of $L_2(G)$ and thus~$C_1$ must cross~$C_2$. 
Furthermore, due to the REL property, $C_1$ and~$C_2$ cannot touch, only cross, and any crossing is right-to-left.
Hence, $C_1$ and $C_2$ dissect $G$ into bounded regions, called \emph{patches}, in a grid like fashion;
see \cref{fig:example:patches}.
Each patch~$P$ corresponds to a finite subgraph of the cover graph~$\tilde G$ and hence is a planar graph.
As in the proof of \cref{clm:torus:orbital}, 
we can extract a rectangular dual~$\R_P$ for patch~$P$.
We use here the consistent numberings from~$L_i^\star(G)$ (\textit{not} the ones in the dual of~$P$),
so~$\R_P$ is a~$(d_2(t_2){-}1) \times (d_1(t_1){-}1)$-rectangle.
See \cref{fig:example:duals,fig:example:patchduals}.
These rectangular duals of patches can be combined to get
a periodic infinite rectangular dual~$\tilde{\R}$ of the cover graph~$\tilde{G}$.   
Namely, if a patch~$P'$ is adjacent to~$P$ in the sense that their boundaries share a walk-piece
(say~$C_1[a,b]$ has~$P$ to its left and~$P'$ to its right), 
then for any~$v$ on~$C[a,b]$ the two corresponding rectangles~$\R_P(v)$ and~$\R_{P'}(v)$ 
have the exact same y-coordinates at the top and bottom, 
because these only depend on~$d_2(\cdot)$ for two faces in~$\reG$ and not on~$P$ or~$P'$.   
In particular, by translating~$\R_{P'}$ by~$d_1(t_1)$
(hence placing it next to~$\R_P$, with~$P$ on the left) 
we obtain a rectangular dual of~$P \cup P'$.
If we tiled the entire plane with~$d_2(t_2) \times d_1(t_1)$-rectangles,
and placed the rectangular duals of the patches in them as dictated by adjacencies among them, 
this would give~$\tilde{\R}$; see \cref{fig:example:tiles}.

\subparagraph{Finding the Flat Torus.}
We must now find a parallelogram~$Q$ such that~$Q \cap \tilde{\R}$ is our desired rectangular dual
that also respects the embedding of $G$ on $\bT$.
For this, we use the meridian~$\bl M$ and horizon~$\re H$;
recall that each of them corresponds to a sequence of alternating faces and edges of $G$.
Define a curve~$\bl\cC_M$ that mirrors~$\bl M$ in~$\tilde{R}$ as follows.
Let~$f = \set{a,b,c}$ be the face in which~$\bl M$ and~$\re H$ cross each other, say, in patch~$P$. 
We start $\bl\cC_M$ at the point~$p_1$ where rectangles~$R_P(a), R_P(b)$ and~$R_P(c)$ meet.
Traversing along~$\bl M$, we next intersect an edge~$e$ to reach a face~$f'$.
Accordingly, we continue curve~$\bl \cC_M$ along the line segment that represents~$e$ in~$\R_P$.
If this brings us to the boundary of~$\R_P$, 
then find the patch~$P'$ that shares this boundary with~$P$, 
and translate~$\R_{P'}$ so that it abuts the side of~$\R_P$ that we have reached.
This could repeat once more (edge~$e$ could belong to a feedback closed walk~$C_i$ twice, but no more than that), 
but eventually we reach the end of~$e$'s segment and hence a point that represents face~$f'$.
We continue this process until for the first time we reach a point~$p_2$ that also represents~$f$
and have hence traversed the entirety of~$\bl M$.
This finishes~$\bl\cC_M$ and we obtain $\re\cC_H$ analogously.
Finally, we get two more curves~$\re\cC_H'$ and~$\bl\cC_M'$ by starting the process at~$p_1$ with~$\re H$ (to reach point~$p_4$) 
and starting the process at~$p_4$ with~$\bl M$; 
these curves are translates of~$\re\cC_H$ and~$\bl\cC_M$ since~$\tilde{\R}$ is periodic, and in particular~$\bl\cC_M'$ ends at~$p_3$.
See \cref{fig:example:tiles}.

The four curves~${\bl\cC_M}, {\re\cC_H}, {\bl\cC_M'}, {\re\cC_H'}$ together form a closed curve (call it~$Q'$).
Curve~$Q'$ may well touch itself, e.g. if~$\bl M$ and~$\re H$ cross the same edge, but we can use an
arbitrarily close simple curve to define `interior' of~$Q'$.
This interior contains every vertex-rectangle exactly once, 
because it corresponds to the flat torus on which~$G$ was given in which every vertex appears exactly once.
Therefore~$Q' \cap \tilde{\R}$ would be a rectangular dual on a flat torus 
if we permitted bizarrely shaped polygons (composed of suitably repeating polygonal paths) in place of parallelograms.
To obtain an actual parallelogram~$Q$, we simply use the four corner points~$p_1, p_2, p_3, p_4$ of~$Q'$.
Then~$Q \cap \tilde{\R}$ is the desired rectangular dual~$\R$; see \cref{fig:example:final}.

\subparagraph{Run-Time Analysis.}
While our proof of existence occasionally went over to the (infinite) UCP,
all vital steps to find the rectangular dual can be done efficiently.
Finding~$C_1$ and~$C_2$ takes linear time, as does splitting the graphs, computing the dual graphs, and computing longest paths.
Since any edge is visited at most four times in total by either~$C_1$ or~$C_2$, the total length of these circuits is~$\Oh(\abs{ V(G) })$,
so splitting graph~$G$ at both~$C_1$ and~$C_2$ gives~$\Oh(\abs{V(G)})$ vertices in total, which hence also
bounds the total size of all patches together.
So we can compute~$\R_P$ for all patches~$P$ in linear time.
Computing curves~$\cC_M$ and~$\cC_H$ can then be done in~$\Oh(n)$ time, where~$n$ is the input-size
(recall that this counts not only~$\abs{ V(G) }$ but also how many crossings there are on~$\bl M$ and~$\re H$).   
This also means that we had to translate~$\Oh(n)$ patches, a result that becomes useful below.

With this, we know~$\R$ implicitly:
We know~$Q$ from the endpoints of the curves, and
we know what the rectangular dual of each patch 
looks like, but we do not know how often each patch needs to be copied to where.
To compute the full rectangular dual~$\R$, we first must analyze how many patches could be relevant.
Recall that~$\tilde{\R}$ was obtained by first filling the entire plane with~$(d_2(s_2){-}1) \times (d_1(s_1){-}1)$-rectangles;
we call these the \emph{tiles}.    
Let~$\cT$ be the set of tiles that intersect~$Q$;
to obtain our rectangular dual we first compute~$\cT$
and then fill each tile with the rectangular dual of the corresponding patch.

To bound~$\abs{\cT}$, we split it into three sets.   
Let~$\cT_1$ be all those tiles that are intersected by the boundary of~$Q$.   
Let~$\cT_2$ be all those tiles that have~$L_\infty$-distance at most 2 from a tile in~$\cT_1$,
i.e., the tile lies within a~$5\times 5$-grid of tiles that is centered at a tile in~$\cT_1$.
Let~$\cT_3$ be all remaining tiles that intersect~$Q$.
We first bound~$\abs{\cT_1}$.
Recall that we translated~$\Oh(n)$ patches while computing~$\cC_M$, therefore~$\cC_M$ intersects~$\Oh(n)$ tiles.
It follows that the tiles containing~$p_1$ and~$p_2$ have~$L_\infty$-distance in~$\Oh(n)$; 
hence~$\overline{p_1p_2}$ can also only intersect~$\Oh(n)$ tiles.
So each side of~$Q$ intersects~$\Oh(n)$ tiles, or~$\abs{\cT_1}\in \Oh(n)$.   
This immediately implies~$\abs{\cT_2}\in \Oh(n)$ since each tile in~$\cT_1$ adds~$\Oh(1)$ tiles to~$\cT_2$.   
To bound~$\abs{\cT_3}$, consider a vertex~$v$ for which~$\R(v)$ lies fully within~$Q$.   
Observe that~$\R(v)$ can intersect at most a~$3\times 3$-grid of tiles, 
because an intersection with the horizontal [vertical] boundary of a tile means that~$\re C_2$ [$\bl C_1$] went through~$v$; 
this can happen at most twice in each direction.   
Therefore, each vertex whose rectangle lies fully inside~$Q$ gives rise to at most 9 tiles in~$\cT_3$.
Vice versa for any tile~$T$ in~$\cT_3$, there must be some vertex~$v$ for which~$\R(v)$ intersects~$T$.
Therefore, all of~$\R(v)$ lies within tiles that have~$L_\infty$-distance 2 from~$T$.   
None of these tiles is in~$\cT_1$ by definition of~$\cT_2$, so all of~$\R(v)$ lies within~$Q$.   
So the tiles intersected by vertices that lie fully in~$Q$ cover all tiles of~$\cT_3$,
and since each such vertex has only one rectangle in~$\R$ we have~$\abs{\cT_3}\in \Oh(\abs{ V(G) })$.

Putting it all together, we have~$\Oh(n)$ tiles that intersect~$Q$,
and computing the rectangular dual means translating $\Oh(n)$ patches and intersecting them with~$Q$,
which takes~$\Oh(\abs{ V(G) })$ time per patch.   
So the total time to compute~$\R$ is~$\Oh(n\abs{ V(G) })$.    
\end{proof}

We note here that the time-bound in \cref{clm:torus:slanted} is usually a vast overestimation; 
normally one would expect far fewer than $\Theta(n)$ tiles to intersect $Q$,
and there would be either few tiles or most of the corresponding patches would not have $\Theta(\abs{V(G)})$ rectangles.   
But we cannot rule out the possibility that a single patch (with $\Theta(\abs{V(G)})$ vertices) 
is intersected $\Theta(n)$ times by a side of $Q$; 
this could happen for example if both sides of $Q$ are nearly vertical and $Q$ hence becomes a very long and skinny sliver.

\subsection{Full Example of Slanted Toroidal Rectangular Dual}
\label{sec:example}

In this section, we illustrate the construction steps of a toroidal rectangular dual~$\R$
for a PTT graph~$G$ embedded on~$\bT$ with a slanted REL~$\cL$ as described in the proof of~\cref{clm:torus:slanted}. 
The steps are as follows:
\begin{itemize}
  \item For both $\blG$ and $\reG$, compute a left-first cycle $C_\ell$ and a right-first cycle $C_r$;
  uses these to compute feedback circuits $\bl C_1$ and $\re C_2$. See \cref{fig:example:feedback}.
  \item Use $\bl C_1$ and $\re C_2$ to dissect the cover graph $\tilde G$ into patches. See \cref{fig:example:patches}.
  \item Use $C_i$ to split $L_i(G)$ into the cylindrical graphs $L_i(G)\para C_i$, compute their duals~$L_i^\star(G)\para C_i$,
  and obtain the consistent numberings~$d_i(\cdot)$. See \cref{fig:example:duals}.
  \item Use the $d_i$ to compute rectangular duals of the patches. See \cref{fig:example:patchduals}.
  \item Tile the UCP into tiles of size $d_2(t_2) \times d_1(t_1)$ and (virtually)
  fill it with the rectangular duals of the patches to obtain the infinite rectangular dual~$\tilde{\R}$.
  By following the merdian $\bl M$ and the horizon $\bl H$ of $\bT$ in the UCP, 
  obtain curves $\cC_M$ and $\cC_H$ in~$\tilde{\R}$. See \cref{fig:example:tiles}.
  \item Based on the endpoints of $\cC_M$ and $\cC_H$, obtain the slanted flat torus~$Q$ filled with the toroidal rectangular dual of $G$.
  See \cref{fig:example:final}.
\end{itemize}

\begin{figure}[ht]
  \captionsetup[subfigure]{justification=centering}
  \centering
  \begin{subfigure}[t]{0.32 \linewidth}
  	\centering
	\includegraphics[page=1]{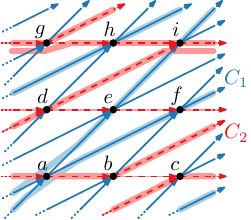}
	\caption{$G$ with $({\bl L_1}, {\re L_2})$ and\\ feedback circuits $\bl C_1$ and $\re C_2$}  
  \end{subfigure}
  \begin{subfigure}[t]{0.32	\linewidth}
  	\centering
	\includegraphics[page=3]{noOrbital2.pdf}
	\caption{$\blG$ with $\gr C_\ell$, $\og C_r$\\ and the resulting $\bl C_1$}  
  \end{subfigure}
  \begin{subfigure}[t]{0.32 \linewidth}
  	\centering
	\includegraphics[page=2]{noOrbital2.pdf}
	\caption{$\reG$ with $\cy C_\ell$, $\pu C_r$\\ and the resulting $\re C_2$}  
  \end{subfigure}
  \caption{Feedback circuits $C_i$ for $i\in [2]$.}
  \label{fig:example:feedback}
\end{figure}

\begin{figure}[ht]
  \centering
  \includegraphics[page=4,trim=35 36.5 70 67,clip]{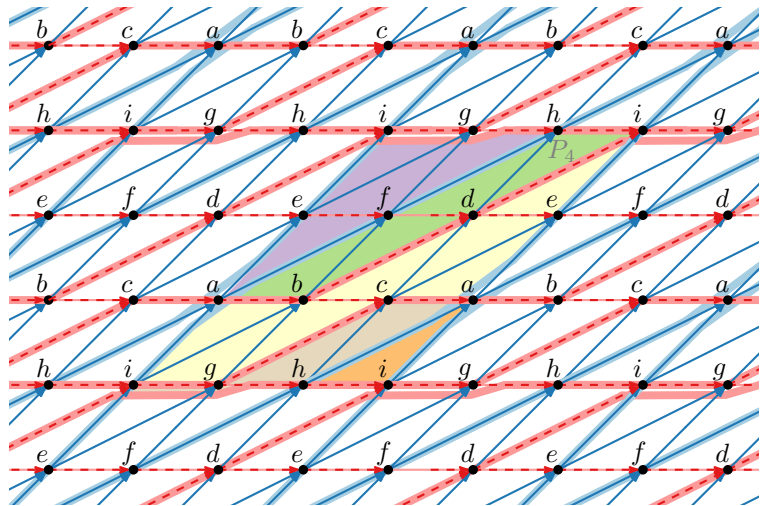}
  \caption{In the cover graph~$\tilde G$ of $G$ we can see the (here five) patches created by the feedback circuits $\bl C_1$ and $\re C_2$.}
  \label{fig:example:patches}
\end{figure}

\begin{figure}[ht]
  \centering
  \begin{subfigure}[t]{1 \linewidth}
  	\centering
	\includegraphics[page=5]{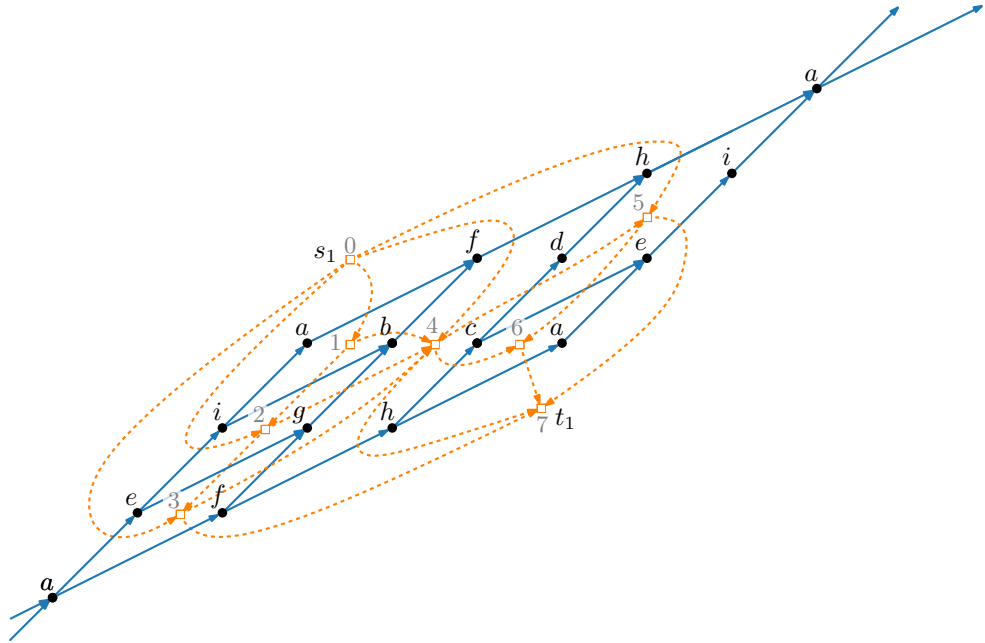}
	\subcaption{$\bl L_1 \para C_1$ and $\og L_1^\star(G)$}
  \end{subfigure}
  \\[3ex]
  \begin{subfigure}[t]{1 \linewidth}
  	\centering
	\includegraphics[page=6]{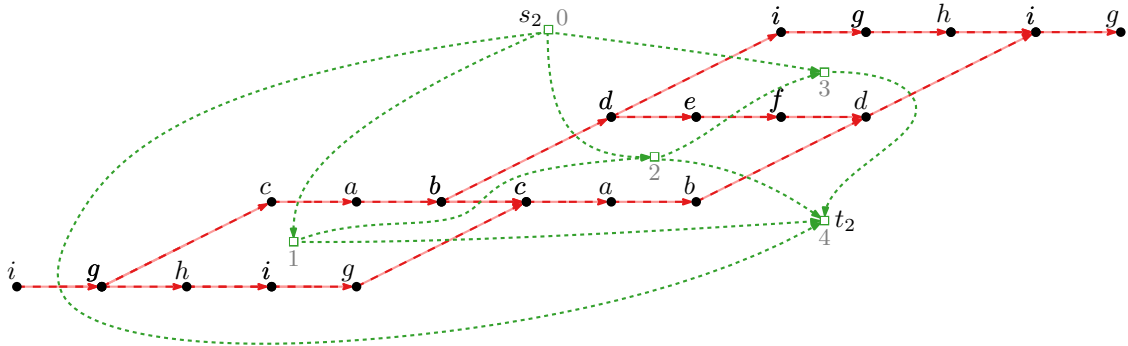}
	\subcaption{$\re L_2 \para C_2$ and $\gr L_2^\star(G)$}
  \end{subfigure}
  \caption{Compute the consistent numberings~$d_i(\cdot)$ in~$L_i^\star(G)\para C_i$ for $i\in [2]$.}
  \label{fig:example:duals}
\end{figure}

\begin{figure}[ht]
  \centering
  \includegraphics[page=7]{noOrbital2.pdf}
  \caption{Rectangular duals of the five patches.}
  \label{fig:example:patchduals}
\end{figure}

\begin{figure}[ht]
  \centering
  \includegraphics[width=\linewidth,page=8]{noOrbital2.pdf}
  \caption{Putting the rectangular duals of the patches together in the tiling of the plane and obtaining curves $\cC_M$ and $\cC_H$.}
  \label{fig:example:tiles}
\end{figure}

\begin{figure}[ht]
  \centering
  \includegraphics[width=\linewidth,page=9]{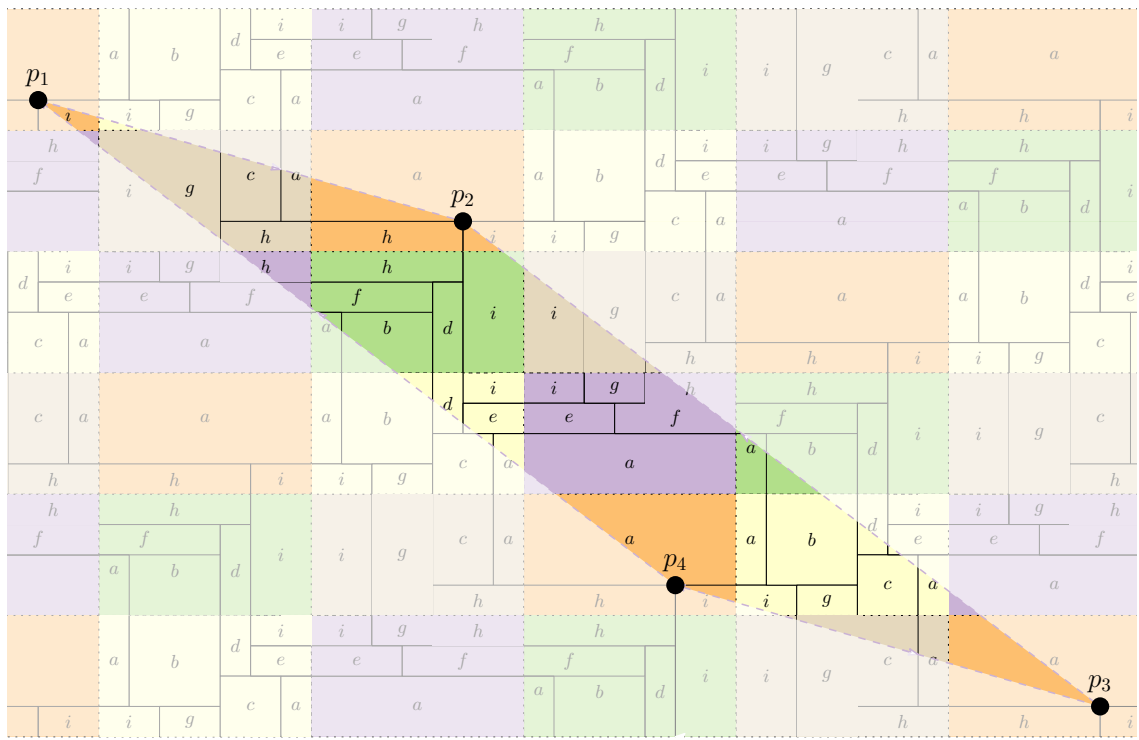}
  \caption{The final rectangular dual.}
  \label{fig:example:final}
\end{figure}

\FloatBarrier

\section{Rectangular Dual on the Cylinder} 
\label{sec:cylinder}

Recall that a graph $G$ is a \emph{properly triangulated cylindrical (PTC) graph} 
if $G$ is cylindrical, internally triangulated, all loops, parallel edges, and separating triangles are non-contractible,
the outer faces have no non-consecutive occurrences of a vertex and no chord,
and the following \emph{degree-condition} holds: If $\set{u,v}$ is a parallel edge and $v$ is incident
to a loop, then $\deg(u) \geq 4$, and either $u$ is on an outer face, or $\deg(u) \geq 6$, or $\deg(u) = 5$
and at least one neighbour of $u$ is not incident to a loop.
In this section, we prove that every PTC graph admits a cylindrical rectangular dual.
We first show how to find a cylindrical REL and 
then how to construct the rectangular dual via reduction to the toroidal case.

\subsection{Cylindrical RELs: Construction via Canonical Orderings} 
\label{sec:cylindricalREL}
The construction of a cylindrical REL is heavily based on the idea of a~\emph{$(3,1)$-canonical order}, 
defined by Biedl and Derka~\cite{BD16}.
This is a partition of the vertices of a planar 4-connected triangulation 
into an ordered sequence of vertex sets that satisfies some conditions.    
This partition can be used to obtain a REL of a PTP graph.
Since we have additional conditions, we cannot use the approach from Biedl and Derka~\cite{BD16}
as a black box and instead re-do and expand it.
We need a helper-lemma, whose proof closely follows Biedl and Derka~\cite[Lemma~1]{BD16}

\begin{lemma} \label{clm:find}
  Let~$G$ be a PTC graph with outer faces~$f_s$ and~$f_t$ and with~$V(f_s) \neq V(G)$.  
  Then there exists a set~$V' \subset V(f_t)$ such that
  \begin{itemize}
	\item~$V'$ contains no vertex of~$f_s$,
	\item~$G \setminus V'$ is a PTC graph, and
	\item~$V'$ is either
	\begin{itemize}
		\item a single vertex~$z$ with~$\deg(z) \geq 4$ (a \emph{singleton}), 
		\item it induces a path~$\set{z_1, \dots, z_k}$ with~$\deg(z_i) = 3$ for~$i \in [k]$ (a \emph{fan}),
		\item $V' = V(f_t) = \set{z_1, \dots, z_k}$ with each $z_i$, $i \in [k]$, incident to a vertex $x$ that spans a loop (a \emph{loop neighborhood}), or
		\item $V' = V(f_t) = \set{z_1, z_2}$ and $\deg(z_1) = \deg(z_2) = 4$ and the two neighbours of $z_1, z_2$ span parallel edges (an \emph{enclosing parallel pair}).
	\end{itemize} 
  \end{itemize}
\end{lemma}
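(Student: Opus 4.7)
The plan is to adapt the proof of Biedl and Derka's Lemma~1 for planar 4-connected triangulations to the more permissive setting of PTC graphs. I would first analyze the structure of $f_t$: since $f_t$ has no chord and no non-consecutive repetition of a vertex, it is bounded either by a simple cycle, by a pair of parallel edges, or by a loop. The easy cases are the latter two, since they directly lead to the ``loop neighborhood'' and ``enclosing parallel pair'' options in the conclusion: when $f_t$ is a loop at a vertex $x$, then $V(f_t) = \{x\}$ plus an entire cycle of neighbors of $x$ (wrapping the cylinder), and the candidate $V'$ is this neighborhood; when $f_t$ is bounded by two parallel edges between $z_1, z_2$, we take $V' = \{z_1, z_2\}$. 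In each of these situations I would verify the degree conditions hold and that removal yields another PTC graph; the special degree hypotheses in the PTC definition are engineered exactly to guarantee this.

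For the main case, where $f_t$ is bounded by a simple cycle $C_t = \langle z_1, \ldots, z_k\rangle$, I would look for a singleton: a vertex $z_j \in C_t \setminus V(f_s)$ with $\deg(z_j) \ge 4$ such that $G \setminus \{z_j\}$ is still PTC. The standard obstruction from the planar case carries over: if no such $z_j$ exists, then for every candidate $z_j$ of degree $\ge 4$ there is a separating triangle through $z_j$ that becomes contractible when $z_j$ is removed. Biedl--Derka's argument, applied to a hypothetical innermost such triangle, forces the existence of a chain of degree-3 vertices on $f_t$ that we can excise as a fan $\{z_p, \ldots, z_q\}$. I would mimic this argument, but now over a cylindrical (not planar) graph; the extra care is that a ``separating triangle'' here may consist of a parallel-edge-plus-loop configuration or a non-contractible triangle, so I have to pick the triangle-analogue that actually separates a disk containing the candidate from the rest.

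The hard part will be ensuring that removing the chosen $V'$ yields a PTC graph, not just a cylindrical internally triangulated graph. Specifically, I need to check that (i) no newly exposed face on the new $f_t$ contains a chord or repeats a vertex non-consecutively, (ii) no previously non-contractible loop, parallel edge, or separating triangle becomes contractible, and (iii) the degree condition in the PTC definition is preserved at every vertex adjacent to $V'$. Condition (ii) is the most delicate: removing a singleton $z$ of degree $\ge 4$ merges two boundary arcs, and if the two neighbors of $z$ on $f_t$ had a common non-neighbor beyond $z$ the result could collapse a non-contractible separating triangle to a contractible one; I would rule this out using the original absence of chords on $f_t$ together with essential non-contractibility of separating triangles. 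Condition (iii) is where the ``loop neighborhood'' and ``enclosing parallel pair'' outputs become unavoidable: if singleton- or fan-removal would leave a neighbour violating the degree condition because it sits adjacent to parallel edges and loops, then the surrounding structure is tightly constrained and one can show $V(f_t)$ must be removed en bloc in one of those two specific configurations.

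Finally, I would organize the case analysis as a decision procedure: first dispose of the loop and parallel-edge boundaries of $f_t$, then attempt a singleton, then a fan, and in each failure step read off enough structural information to reach one of the four prescribed outputs. The argument is inductive in spirit (``remove $V'$ and we still have a PTC graph, possibly smaller''), but the lemma itself is just the single-step existence, so induction is not invoked here. I expect the bookkeeping around the degree condition --- especially when a vertex adjacent to $V'$ is itself incident to a loop or a parallel edge --- to be the most error-prone part of the write-up, and where the statement's specific numerical thresholds ($\deg(u)\ge 4$, $\ge 5$, $\ge 6$) will be used in tight ways.
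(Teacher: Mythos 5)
Your overall skeleton (dispose of the non-simple boundary configurations of $f_t$, then adapt Biedl--Derka's singleton-or-fan argument to the simple-cycle case) matches the paper's proof, which likewise views $G$ as a planar graph with $f_t$ outside and $f_s$ as a distinguished inner face. But two of your case descriptions would fail as written. First, you conflate ``$f_t$ is a loop at $x$'' with the \emph{loop neighborhood} output: if $f_t$ is bounded by a loop at $x$, then $V(f_t)=\{x\}$, so the cycle of neighbours of $x$ is \emph{not} contained in $V(f_t)$ and cannot serve as $V'$; the correct move there is the singleton $V'=\{x\}$ (which has $\deg(x)\ge 4$, since otherwise the loop would be a chord). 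The loop-neighborhood output arises in the opposite situation, namely when some vertex $x\notin V(f_t)$ carries a non-contractible loop and \emph{all} of $V(f_t)$ lies in the neighbourhood of $x$; then $V'=V(f_t)$ and $x$'s loop becomes the new outer face. Second, when $f_t$ is bounded by a pair of parallel edges on $z_1,z_2$ you cannot always take $V'=\{z_1,z_2\}$: the lemma only permits this when $\deg(z_1)=\deg(z_2)=4$ and the inner neighbours $x,y$ themselves span parallel edges. If that configuration is absent, $\{z_1,z_2\}$ is none of the four allowed output types, and the paper instead removes a single one of $z_1,z_2$ as a singleton (chosen to avoid $V(f_s)$ and any separating triangle).

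In the main case your obstruction analysis is also off: deleting a vertex cannot make a non-contractible separating triangle contractible (the homotopy class of a cycle surviving in $G\setminus\{z\}$ is unchanged), so ``a separating triangle that becomes contractible when $z_j$ is removed'' is not the thing to rule out. The real obstruction is that $G\setminus\{z\}$ acquires a chord on its new outer face, and the paper's organizing device for this is the \emph{2-leg}: an inner vertex $x$ with two edges to $V(f_t)$ whose inside (the side not containing $f_s$) contains further vertices. If a 2-leg exists, the inside of a maximal 2-leg is a planar graph with no loops, parallel edges, or separating triangles except through one added outer edge, and Biedl--Derka's Lemma~1 applies there essentially verbatim to produce the singleton or fan; if no 2-leg exists, one picks a singleton $z\in V(f_t)\setminus V(f_s)$ outside a maximal separating triangle and checks directly that no chord can appear (degree $2$ or $3$ is excluded because it would force a chord or a 2-leg). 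Without some substitute for this 2-leg reduction, your ``innermost separating triangle'' argument does not obviously produce the fan, since in the planar $4$-connected setting the fan comes from chord/2-leg structure rather than from separating triangles.
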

\begin{proof}
For ease of description, from now on we view~$G$ as a planar graph~$\hat{G}$;
one outer face~$f_t$ of~$G$ becomes the (unique) outer face of~$\hat{G}$
while the other outer face~$f_s$ of~$G$ becomes some inner face.
The condition that required that all loops/parallel edges/separating triangles
are non-contractible in~$G$ becomes in graph~$\hat{G}$ that they have~$f_s$ inside.
Enumerate the outer face vertices of~$\hat{G}$ as~$c_1, c_2,\dots, c_\ell$ in clockwise order.

We first consider the two special cases possible when $G$ is non-simple.

\medskip\indent\textbf{Case 1 -- loop neighborhood:}
Suppose that all vertices $c_1, c_2,\dots, c_\ell$ are adjacent to a vertex~$x$ that spans a loop; see \cref{fig:cylinder:canonical:loop}.
Since $f_s$ lies inside this loop, we know that $V(f_t) \cap V(f_s) = \emptyset$.
Set $V' = \set{c_1, c_2,\dots, c_\ell}$, a loop neighborhood. 
It is straightforward to check that $G \setminus V'$ is a PTC graph,
since $x$ becomes the only vertex on $f_t$.
Thus there cannot be a chord at $x$, the degree-condition holds for $x$ (and still for all other vertices),
and all loops/parallel edges/separating triangles have~$f_s$~inside.

\medskip\indent\textbf{Case 2 -- enclosing parallel pair:}
We first consider the special case that $f_t$ is bounded by a pair of parallel edges incident to $c_1$ and $c_2$
whose two neighbors $x$ and $y$ span a pair of parallel edges; see \cref{fig:cylinder:canonical:pair}.
(Note that neither $c_1$ nor $c_2$ can have a loop as this would be a chord of $f_t$.) 
Since $f_s$ lies insides this inner pair of parallel edges, we know that $V(f_t) \cap V(f_s) = \emptyset$.
Set $V' = \set{c_1, c_j}$, an enclosing parallel pair.

To see that $G \setminus V'$ is a PTC graph,
we have to show that neither $x$ nor $y$ can be incident to a loop (which would be a chord).
For the sake of contradiction, suppose that $x$ was incident to a loop, which necessarily 
contains $f_s$ and so must lie inside the pair of edges of $x$ and $y$. 
By the degree-condition (and since $y$ is then neither on $f_s$ nor on $f_t$) means $\deg(y) \geq 5$.
Vertex $y$ then cannot be incident to a loop since this loop could not separate faces $f_s$ and $f_t$.
Neither of the edges from $y$ to $c_1, c_2$ can be parallel edges (by $\deg(c_i) = 4$ for $i \in [2]$),
and so $y$ then must have a neighbour $w \neq x, y, c_1, c_2$.
But all faces incident to $y$ in the graph induced by $\set{x, y, c_1, c_2}$ are triangles
(possibly degenerate ones consisting of a pair of parallel edges and a loop) and do not
contain $f_s$, so this means that the face containing $w$ becomes a separating triangle, a contradiction.

\begin{figure}[t]
  \centering
  \begin{minipage}[t]{0.4\linewidth}
    \centering
    \includegraphics[page=10]{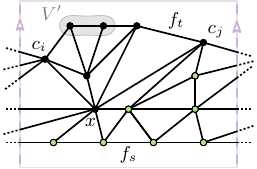}
	\subcaption{Case 1: $V'$ is a loop neighborhood.}
	\label{fig:cylinder:canonical:loop}
  \end{minipage}
  \hfil
  \begin{minipage}[t]{0.4\linewidth}
    \centering
  	\includegraphics[page=9]{cylinder}
	\subcaption{Case 2: $V'$ is an enclosing parallel pair.}
	\label{fig:cylinder:canonical:pair}
  \end{minipage}
  \caption{For \cref{clm:find}, first two cases of finding vertex set~$V'$ and obtaining a cylindrical REL.}
  \label{fig:cylinder:canoni}
\end{figure}

\medskip
For the next two cases we first need some definitions.
Assume that there is an inner vertex $x$ with two edges $\set{x,c_i}, \set{x,c_j}$ to vertices on the outer face.
Such a vertex splits the graph~$\hat{G}$ into two parts bounded by~$\croc{c_i, x, c_j}$
and one or the other path from~$c_i$ to~$c_j$ on the outer face.  
The \emph{outside [inside] of $\croc{c_i,x,c_j}$} is the part that does [does not] contain~$f_s$. 
We call $\croc{c_i,x,c_j}$ a \emph{2-leg} if its inside contains at least one other vertex of~$\hat{G}$;
vertex $x$ is then called a \emph{2-leg center}; see \cref{fig:cylinder:canonical:second}.
Note that $c_i = c_j$ is possible and allowed exactly when $c_i$ and $x$ span a pair of parallel edges. 
For a 2-leg-center~$x$ there may be multiple 2-legs if~$x$ has three or more neighbours on the outer face.
The \emph{maximal 2-leg} of~$x$ is the 2-leg $\croc{c_i, x, c_j}$ for which the outside is as small as possible.

\medskip\indent\textbf{Case 3 -- there exists a 2-leg:}
Then we can find the vertex set~$V'$ almost exactly as in Biedl and Derka~\cite{BD16}.   
Briefly, let~$x$ be a 2-leg center, and let~$\croc{c_i, x, c_j}$ be its maximal 2-leg; see \cref{fig:cylinder:canonical:leg}.
We assume that Case 1 does not apply, so if $x$ spans a loop than not all vertices on $f_t$ are adjacent to $x$.
Up to renaming of~$c_1, \dots, c_\ell$ and splitting\footnotemark $c_i$ if $c_i = c_j$,
we may assume that~$i < j$ and the inside of the 2-leg is bounded by~$x, c_i, c_{i+1}, \dots, c_{j-1}, c_j, x$.
Note that thus $i \leq j - 2$.
\footnotetext{Imagine splitting $c_i$ by cutting through $c_i$ from $f_t$ such that both the resulting copies remain adjacent to $x$.}
Call the inside~$H$ and let~$H^+$ be obtained from~$H$ by adding the edge~$\set{c_i, c_j}$ 
such that the outer face of~$H^+$ is the cycle~$c_i,c_{i+1},\dots,c_j,c_i$.
Observe that~$H$ has no loops, parallel edges, or separating triangles (they could not contain~$f_s$ inside),
and so all separating triangles of~$H^+$ contain edge~$\set{c_i, c_j}$.

Now apply the corresponding lemma from Biedl and Derka~\cite[Lemma 1]{BD16} to~$H^+$ 
(with outer face edge~$(c_i, c_j)$) to find a set~$V'$ that is either a singleton or a fan, 
contains only outer vertices (but not~$c_i, c_j$), 
and such that~${H^+} \setminus V'$ has no chord
and all separating triangles use edge~$(c_i, c_j)$.
One easily verifies that the same set~$V'$ works for our lemma.
\footnote{The lemma in Biedl and Derka~\cite{BD16} demands a graph
that has no separating triangles at all,
but as one easily verifies, the proof goes through verbatim as long as all
separating triangles must contain the pre-specified edge on the outer face
that is not allowed to be in~$V'$.}

\begin{figure}[t]
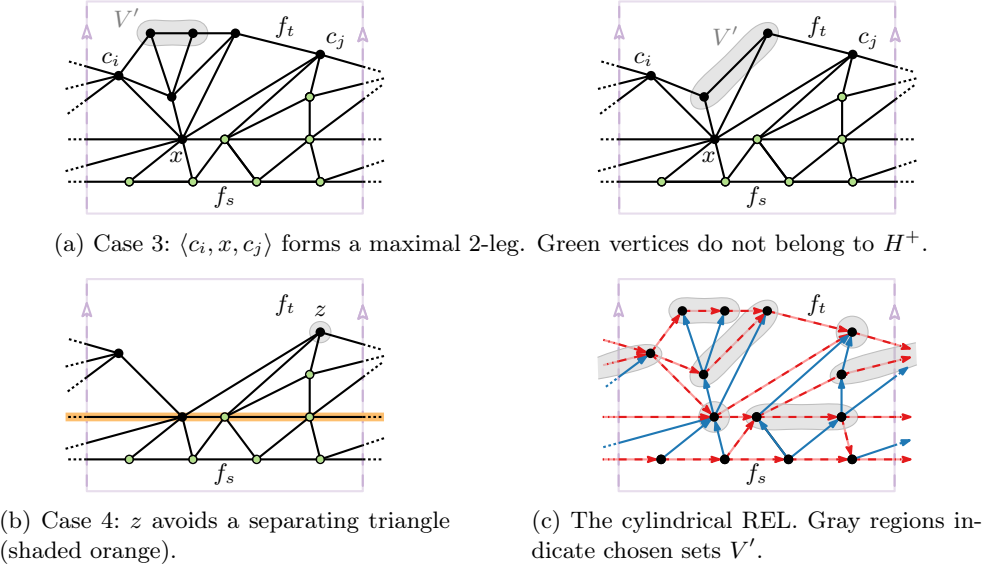

  \centering
  \begin{minipage}[t]{\linewidth}
    \centering
  	\begin{minipage}[b]{0.4\linewidth}
        \centering
        \includegraphics[page=1]{cylinder}
    \end{minipage}
	\hfil
  	\begin{minipage}[b]{0.4\linewidth}
        \centering
        \includegraphics[page=2]{cylinder}
    \end{minipage}
	\subcaption{Case 3:~$\croc{c_i, x, c_j}$ forms a maximal 2-leg.
	Green vertices do not belong to~$H^+$.}
	\label{fig:cylinder:canonical:leg}
  \end{minipage}

  \medskip

  \begin{minipage}[t]{0.4\linewidth}
   \centering
   \includegraphics[page=3]{cylinder}
	\subcaption{Case 4:~$z$ avoids a separating triangle (shaded orange).}
	\label{fig:cylinder:canonical:otherwise}
  \end{minipage}
  \hfil
  \begin{minipage}[t]{0.4\linewidth}
   \centering
  	\includegraphics[page=5]{cylinder}
	\subcaption{The cylindrical REL. Gray regions indicate chosen sets~$V'$.}
	\label{fig:cylinder:canonical:full}
  \end{minipage}
  \caption{For \cref{clm:find}, next two cases of finding vertex set~$V'$ and obtaining a cylindrical REL.}
  \label{fig:cylinder:canonical}
\end{figure}

\medskip\indent\textbf{Case 4 -- there is no 2-leg:} 
We distinguish four subcases to pick $z$.\\
(i) If $f_t$ is a loop, let $z$ be the sole vertex on $f_t$.
Note that $\deg(z) \geq 4$ and $z \not\in V(f_s)$ as otherwise the loop would be a chord. \\
(ii) Suppose $f_t$ is bounded by of parallel edges on $c_1$ and $c_2$.
We assume that Case 2 does not apply.
If either $c_1$ or $c_2$ is part of a separating triangle~$T$ with edges not on $f_t$, then
let $z$ be the other one and observe that it is separated from~$V(f_s)$ by~$T$.
Otherwise, let $z$ be either of the two that is not in~$V(f_s)$. \\
(iii) Otherwise and if~$G$ contains no separating triangles, 
then let~$z$ be an arbitrary vertex of~$V(f_t) \setminus V(f_s)$, which exists by assumption.\\
(iv) Otherwise, let~$T$ be the separating triangle of~$G$ that maximizes the number of vertices inside.
Then any other separating triangle~$T'$ must be on or inside~$T$, for otherwise it either contains~$T$ inside
(contradicting the choice of~$T$), or it does not have~$f_s$ inside (contradicting non-contractability).
Since~$T$ is separating, not all vertices of~$f_t$ belong to~$T$; 
let~$z$ be any vertex of~$V(f_t) \setminus T$ and observe that it is separated from~$V(f_s)$ by~$T$;
see \cref{fig:cylinder:canonical:otherwise}.

We have~$\deg(z) \neq 2$, since otherwise its neighbors would form a chord.
We have~$\deg(z) \neq 3$, since otherwise its neighbors would form a 2-leg 
(at least one of its neighbours, say $x$, cannot be on~$f_t$, else $\set{x, z}$ would be a chord).
Therefore,~$\deg(z) \geq 4$ and~$V' = \set{ z }$ is a singleton vertex.
Assume for contradiction that~$\hat{G}' = \hat{G} \setminus z$ had a chord~$\set{x,y}$ on its outer face~$f_t'$.
Since~$\hat{G}$ had no chord on~$f_t$, at least one of the endpoints, say~$x$,
must not be on~$f_t$, so it is a neighbour of~$z$.
If~$y$ is on~$f_t$, then this would have made $\set{z, x, y}$ a 2-leg, contradicting the case. 
If~$y$ is not on~$f_t$ and $x \neq y$, then it is also a neighbour of~$z$,
and~$\set{z, x, y}$ is a separating triangle, contradicting the choice of~$z$.
If~$y$ is not on~$f_t$ and $x = y$, then~$\set{x,y}$ is a loop (because $G$ is triangulated)
and not a chord (and Case 1 applies).
The remaining properties are easily checked as before.
\end{proof}

\begin{lemma} \label{clm:cylindricalREL}
  For any PTC graph $G$, we can in linear time find a cylindrical REL
  for which each {\re red} edge lies on a {\re red} cycle and the {\bl blue} graph is~acyclic.
\end{lemma}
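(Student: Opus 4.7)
The plan is to adapt the $(3,1)$-canonical-order approach of Biedl and Derka~\cite{BD16} to the cylindrical setting, using \cref{clm:find} as the peeling step. Iteratively peel off vertex sets: set $G_m = G$, and for $i = m, m-1, \ldots, 1$ apply \cref{clm:find} to $G_i$ to obtain a set $V'_i \subset V(f_t(G_i))$ of one of the four types (singleton, fan, loop neighborhood, enclosing parallel pair) and set $G_{i-1} = G_i \setminus V'_i$, continuing until $V(G_0) = V(f_s)$. The REL is then built in reverse: orient the edges of $f_s$ as a red directed cycle (this is all of $G_0$ since $f_s$ has no chord), then process $V'_1, V'_2, \ldots, V'_m$ in this order. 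The key loop invariant is that at every stage the boundary of the current $f_t$ is a red directed cycle and every red edge already colored lies on a red cycle.

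For each type of $V'_i$, the colorings I would use are as follows. For a singleton $z$ with contiguous neighbors $u_1, \ldots, u_d$ on the previous top, orient $(u_1, z)$ and $(z, u_d)$ red and color each $(u_i, z)$ for $1 < i < d$ blue incoming; if $z$ carries a loop (the subcase where $f_t$ at that moment is the loop), orient the loop red as a self-cycle and all non-loop edges of $z$ blue incoming. For a fan $\{z_1,\ldots,z_k\}$ with common lower neighbor $u^\star$ and endpoint neighbors $u_0, u_k$, form the red directed path $u_0 \to z_1 \to z_2 \to \cdots \to z_k \to u_k$ and color each $(u^\star, z_i)$ blue incoming. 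For a loop neighborhood around $x$, color the loop at $x$ as a red self-cycle, each $(x, c_i)$ blue incoming, and the new boundary cycle on the $(c_i, c_{i+1})$ as a red directed cycle. For an enclosing parallel pair $\{z_1, z_2\}$ around an inner pair $\{x,y\}$, color both $x$-$y$ edges as a red $2$-cycle, both outer $z_1$-$z_2$ edges as a red $2$-cycle, and all four edges between $\{x,y\}$ and $\{z_1, z_2\}$ blue oriented upward. In each case the newly created red arc through $V'_i$ replaces exactly the portion of the old top cycle that has become interior, so the top-cycle invariant is preserved, and any old interior red edge lies on a red cycle obtained by closing through the new top across $V'_i$.

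Verification involves four properties: the REL property at inner vertices, the outer-vertex condition, the red-cycle property, and blue acyclicity. Blue acyclicity is immediate because every blue edge is oriented from a vertex of $G_{i-1}$ to a vertex of $V'_i$, so the order of addition is a topological order. The outer-vertex condition on $f_s$ follows because the vertices of $f_s$ receive only the initial red cycle edges plus possibly incoming blues from above. The red-cycle property is the loop invariant. The REL property at a new inner vertex $v$ (in clockwise order: incoming blue, incoming red, outgoing blue, outgoing red) is guaranteed by $v$'s neighbors being contiguous on the frontier both when $v$ is first added (giving its incoming colors below and to the left/right) and when $v$ becomes interior after the next layer is added above (giving its outgoing colors above and to the left/right).

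The main obstacle is the two non-simple cases (loop neighborhood, enclosing parallel pair), which have no analogue in Biedl--Derka's planar construction. A red self-loop at $x$ effectively occupies both an incoming-red and an outgoing-red slot in the cyclic order around $x$, and the enclosing-pair coloring creates a red $2$-cycle at $\{x,y\}$ together with four new blue edges; in both cases it is non-trivial to show that the REL cyclic-order requirement still holds at $x$ (and $y$). This is exactly where the PTC degree-condition is tailored: it ensures that $x$ and $y$ have enough other incident edges of the right types to complete the cyclic pattern with all four groups in the correct order at inner vertices. The linear-time bound then follows from the amortized analysis of Biedl--Derka~\cite{BD16} for the singleton/fan cases together with constant work per loop-neighborhood and enclosing-pair peel.
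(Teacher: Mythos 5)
Your proposal follows essentially the same route as the paper: repeated peeling via \cref{clm:find} (equivalently, induction on $\abs{V(G)}$ with base case $G = f_s$ as a clockwise red cycle), the same edge colorings for singletons, fans, loop neighborhoods, and enclosing parallel pairs (red path/cycle along the new top, all other edges blue and directed into $V'$), the same topological-order argument for blue acyclicity, and the same ``every red edge lies on the current top cycle'' invariant, with the linear-time bound inherited from Biedl--Derka's amortized analysis. The level of detail in verifying the REL property at the non-simple cases matches the paper's own (which likewise defers to the PTC degree-condition), so there is nothing substantive to add.
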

\begin{proof}
We proceed by induction on~$\abs{V(G)}$. Let~$f_s$ and~$f_t$ be the outer faces.
If~$V(G) = V(f_s)$, then~$G$ is a single cycle that bounds
both~$f_s$ and~$f_t$ (otherwise there would be a chord). Make the edges of
this cycle {\re red}, and orient them clockwise (by which we mean that~$f_s$ is
to the right of the directed cycle); this gives the desired REL.

If~$V(G) \neq V(f_s)$, apply~\cref{clm:find} to find~$V'$ and let~$G' = G \setminus V'$.
Since~$V'$ contains vertices of~$f_t$ and is disjoint from~$f_s$, 
one outer face of~$G'$ is again~$f_s$; let~$f_t'$ be the other outer face.
Since~$G'$ is a smaller PTC graph,  
we can find by induction a cylindrical REL for~$G'$ that satisfies the condition
and for which~$f_t'$ is oriented~clockwise and {\re red}. 

\begin{figure}[t]
  \centering
  \begin{minipage}[t]{0.32 \linewidth}
	\centering
    \includegraphics[page=1]{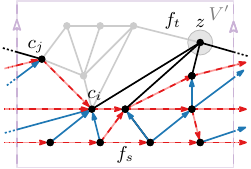}
  \end{minipage}
  \hfill
  \begin{minipage}[t]{0.32 \linewidth}
	\centering
     \includegraphics[page=2]{cylinderREL}
  \end{minipage}
  \hfill
  \begin{minipage}[t]{0.32 \linewidth}
	\centering
     \includegraphics[page=3]{cylinderREL}
  \end{minipage}
  \caption{Processing first a singleton $\set z$ and then a fan $\set{z_1, z_2}$ to construct a cylindrical REL.}
  \label{fig:cylinder:REL}
\end{figure}

First, suppose $V' = V(f_t)$.
Then either $V' = \set{z}$ and $f_t$ is a loop $e$ at $z$,
or $V' = \set{z_1, z_2}$ and an enclosing parallel pair,
or $V' = \set{z_1, \ldots, z_k}$ is a loop neighborhood. 
Either way, orient the edge(s) on $f_t$ clockwise and color them {\re red}.
Direct all other edges incident to~$V'$ towards~$V'$ and color them {\bl blue}.
Note that since~$G$ is internally triangulated, all vertices on $f_t'$ have neighbors in~$V'$
and vice versa.

Next, suppose $V' \neq V(f_t)$.
Enumerate~$f_t'$ as~$c_1, \dots, c_\ell$ in clockwise order
in such a way that edge~$c_1$ also belongs to~$f_t$.   
Let~$c_i, c_j$ be neighbors of~$V'$ on~$f_t'$ with~$i,j$ minimal and maximal, respectively.   
Since~$G$ is internally triangulated, all vertices in~$c_i, \dots, c_j$ have neighbors in~$V'$.  
If~$V'$ is a singleton, its unique vertex~$z_1$ is adjacent to all of~$c_i, \dots, c_j$.
If~$V'$ is a fan (enumerated as~$\langle z_1,\dots,z_k \rangle$ along the path, starting
with the neighbor of~$c_i$) then~$j = i{+}2$ and~$c_{i+1}$ is adjacent to all of~$V'$.   
Either way, we have a path~$\langle c_i,z_1,\dots,z_k,c_j \rangle$,
which we direct from~$c_i$ to~$c_j$ and color it {\re red}.  
(In particular, all vertices of~$V'$ receive incoming and outgoing {\re red} edges, and
the outer face~$f_t$ becomes a clockwise directed {\re red} cycle.) 
Direct all other edges incident to~$V'$ towards~$V'$ and color them {\bl blue}.
This gives an incoming {\bl blue} edge to every vertex of~$V'$, while all vertices in~$\set{c_{i+1}, \dots, c_{j-1}}$
(i.e.\ on~$f_t'$ but not on~$f_t$) receive an outgoing {\bl blue} edge towards~$V'$.
One verifies that this is a cylindrical REL.

We must show that every {\re red} edge belongs to a (directed) cycle and there are no {\bl blue} (directed) cycles.   
By induction this held for~$G'$.   
Adding the blue edges towards~$V'$ cannot create blue cycles since all vertices of~$V'$ become sinks in~$G$.
Every added red edge lies on~$f_t$, hence on a red cycle. 

The linear-time algorithm by Biedl and Derka~\cite{BD16} to compute a~$(3,1)$-canonical order 
can be straightforwardly adapted to compute the cylindrical REL in linear time.
Its amortized time to find set~$V'$ is~$\Oh(\abs{V'})$, 
and so the REL can be computed in linear time.
\end{proof}

\subsection{From Cylindrical REL to Cylindrical Rectangular Dual} 
\label{sec:cylindricalRD} 
Now that we can compute a cylindrical REL of a PTC graph, 
we can construct a rectangular dual. 
Note that we indeed need a PTC graph.

\begin{lemma} \label{clm:cylinder:necessary}
  If a cylindrical graph $G$ has a rectangular dual, then~$G$ is a PTC~graph.
\end{lemma}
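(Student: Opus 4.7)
The plan is to lift $\R$ to the universal covering plane of the flat cylinder~$Q$: by pasting copies of~$\R$ along the identified left-right sides, we obtain a planar rectangular dual~$\tilde{\R}$ of the cover graph~$\tilde{G}$ inside an infinite horizontal strip. Every contractible feature of~$G$ (a loop, a parallel edge pair, or a separating triangle) lifts injectively to a bounded region of~$\tilde{\R}$, so configurations that are forbidden in planar rectangular duals translate back into forbidden \emph{contractible} configurations in~$G$.

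I would then verify the PTC conditions one by one. Internal triangulation follows because the no-four-rectangles-share-a-point condition forces each interior meeting point of rectangles in~$\R$ to join exactly three rectangles, giving triangular inner faces. For the non-contractibility of loops, parallel edges, and separating triangles, a contractible such feature of~$G$ lifts to the analogous configuration within a disk of~$\tilde{\R}$; but planar rectangular duals contain no loops (a single axis-aligned rectangle does not self-touch), no parallel edges (two axis-aligned rectangles with disjoint interiors share at most one contact segment), and no separating triangles (the classical property cited as \cite{KH97}). For the outer-face conditions, the top and bottom of~$Q$ are single horizontal lines, and each rectangle contacts either of them in a single arc; hence no vertex can occur twice non-consecutively on an outer face, and a chord between two outer vertices would force their rectangles to share a vertical contact segment, which in turn would make those rectangles consecutive in the cyclic sequence of top-touching rectangles, a contradiction.

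The main obstacle is the degree condition, since the configuration ``$u$ has parallel edges to~$v$ with~$v$ loop-bearing'' is intrinsically non-contractible and does not directly reduce to a planar statement. The key observation is that a loop at~$v$ forces~$R(v)$ to span the full width of~$Q$: the only way an axis-aligned rectangle on the cylinder can self-touch is by occupying the entire horizontal circle so that its identified left and right sides form a single vertical self-contact segment. Given this rigidity, I would analyze the placement of~$R(u)$ relative to such full-width strips and count the minimum number of distinct rectangles that must border~$R(u)$ to produce its two parallel contacts with~$R(v)$ without violating the no-four-at-a-point condition. The three subcases of the degree bound then correspond, respectively, to $u$ lying on an outer face (relaxing the bound to~$4$), $R(u)$ being sandwiched between two full-width strips from two loop-bearing neighbours (tightening it to~$6$), and the intermediate interior case with one such neighbour (giving~$5$).
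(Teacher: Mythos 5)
Your proposal is correct and follows essentially the same route as the paper: read the outer-face conditions off the linear order of rectangles along the unidentified sides of the strip, rule out contractible loops, parallel edges and separating triangles because they would lift to impossible planar configurations, and handle the degree condition geometrically via the observation that a loop forces its rectangle to be a full-width band. The one step you leave as a plan is also the one the arithmetic hinges on: the two parallel contacts with the band $\R(v)$ force $\R(u)$ to wrap around the cylinder as well, so $u$ carries its own loop, and that loop alone contributes $2$ to $\deg(u)$; together with the two parallel edges to $v$ this already gives $\deg(u)\geq 4$, and the far side of the band $\R(u)$ then adds $0$, $2$, or at least $1$ according to whether it meets an outer face, another loop-carrying band, or anything else --- exactly the paper's count. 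Note that your framing of ``counting the distinct rectangles that border $\R(u)$'' would miss the self-contact of $\R(u)$ (it is not a distinct bordering rectangle but still contributes two to the degree), so that loop needs to be made explicit for the bounds $4$, $5$, $6$ to come out.
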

\begin{proof}
Having a cylindrical rectangular dual $\R$ implies 
that $G$ must be internally triangulated and has only non-contractible separating triangles, parallel edges, and loops.
For each outer face $f$ of~$G$, the rectangles $\R(v)$ for all vertices on $f$
must attach at an unidentified side of the flat cylinder in the order
in which they appear on $f$.   
Assuming that this side is horizontal, 
then for any two non-consecutive vertices $v,w$ on $f$, 
the two rectangles $\R(v), \R(w)$ cannot share an x-coordinate.   
Therefore, there is no edge $\set{v, w}$, i.e.\ $f$ has no chord. 
By the same argument we cannot have two occurrences of a vertex
$v$ on $f$, because its rectangle $\R(v)$ occupies a contiguous
part of the side.

If a rectangle $\R(v)$ loops around the cylinder,
then on either side is either an outer face or a rectangle $\R(u)$
that shares two horizontal segments with $\R(v)$.
Since $\R(u)$ also has at least one rectangle to the left and to the right (or touches itself),
and on the opposite side of $\R(u)$ either an outer face, 
a double contact with another looping rectangle $\R(w)$,
or at least one other rectangle, the degree of $v$ is at least four, six, or five, respectively.
Hence $G$ is a PTC graph.
\end{proof}

Note that the degree-constraint for PTC graph was not require in a PTT graph,
since in a toroidal rectangular dual parallel edges need not have the same color; see~\cref{fig:cylinderVStorus}.

\begin{figure}[ht]
  \captionsetup[subfigure]{justification=centering}
  \centering
  \begin{subfigure}[t]{0.33 \linewidth}
  	\centering
	\includegraphics[page=1]{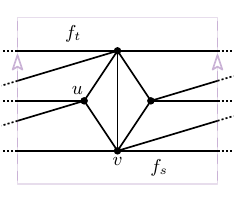}
	\caption{Cylindrical graph that does not admit a REL;
	the degree of $u$ is not high enough.}  
	\label{fig:cylinderVStorus:cylinder}
  \end{subfigure}
  $\quad$
  \begin{subfigure}[t]{0.33	\linewidth}
  	\centering
	\includegraphics[page=2]{cylinderNonSimple.pdf}
	\caption{Extension into a PTT graph with REL.}  
	\label{fig:cylinderVStorus:torus}
  \end{subfigure}
  $\quad$
  \begin{subfigure}[t]{0.25 \linewidth}
  	\centering
	\includegraphics[page=3]{cylinderNonSimple.pdf}
	\caption{Rectangular dual of the PTT graph.}  
	\label{fig:cylinderVStorus:recDual}
  \end{subfigure}
  \caption{Some non-simple cylindrical graphs cannot be realized by cylindrical rectangular duals
  but can appear as subgraphs of PTT graphs.}
  \label{fig:cylinderVStorus}
\end{figure}

\begin{theorem} \label{clm:cylinder}
  For a PTC graph~$G$ a cylindrical rectangular dual~$\R$ of~$G$ 
  that lies on a rectangular flat cylinder can be computed in linear time.
\end{theorem}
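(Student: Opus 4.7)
The plan is to compute a cylindrical REL of $G$ using \cref{clm:cylindricalREL} and then reduce the rectangular dual construction to the toroidal case handled by \cref{clm:torus:orbital}.

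First, I would apply \cref{clm:cylindricalREL} to obtain, in linear time, a cylindrical REL~$\cL = ({\bl L_1}, {\re L_2})$ of~$G$ in which $\blG$ is acyclic and each {\re red} edge lies on a {\re red} cycle. In particular, the two outer faces~$f_s$ and~$f_t$ are bounded by {\re red} directed cycles that go around the cylinder, corresponding to {\re 2-orbital} cycles once the cylinder is closed up into a torus.

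Second, I would extend~$G$ into a PTT graph~$G'$ by closing off the two outer faces via a suitably designed \emph{connector}: a thin ring of vertices and edges glued across $f_s$ and $f_t$ so that (a)~$G'$ is toroidal and triangulated with only non-contractible loops, parallel edges, and separating triangles; (b)~the {\re red} edges of $G$ continue into {\re 2-orbital} cycles in~$\reG'$; and (c)~new {\bl blue} edges in the connector, together with the blue paths from $f_s$ to $f_t$ that exist in~$\blG$ (whose sources lie on $f_s$ and sinks on $f_t$), complete each {\bl blue} edge of $G$ into a {\bl 1-orbital} cycle of~$\blG'$. The resulting REL~$\cL'$ on~$G'$ is then orbital. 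Applying \cref{clm:torus:orbital} to~$(G', \cL')$ produces a toroidal rectangular dual~$\R'$ on a rectangular flat torus in linear time. By construction, the connector occupies a single horizontal slab of~$\R'$ whose top and bottom boundaries are horizontal lines corresponding to~$f_t$ and~$f_s$, so excising this slab yields the desired rectangular dual~$\R$ of~$G$ on a rectangular flat cylinder. Since the connector adds only~$\Oh(\abs{V(G)})$ vertices and edges, the entire procedure runs in linear time.

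The main obstacle is designing the connector so that~$\cL'$ is orbital while maintaining all PTT invariants of~$G'$. Routing enough {\bl blue} edges through the connector to close the inherently acyclic blue structure of~$G$ into {\bl 1-orbital} cycles through every {\bl blue} edge of~$G$ is delicate, especially since we must simultaneously avoid creating contractible loops, parallel edges, or separating triangles and must preserve triangulation. I expect the degree-condition in the definition of PTC graphs to be exactly what ensures that such a connector exists; a short case analysis on the structure of $f_s$ and $f_t$ (a simple cycle, a pair of parallel edges, or a loop) should cover all configurations.
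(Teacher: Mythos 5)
Your overall strategy coincides with the paper's: compute a cylindrical REL via \cref{clm:cylindricalREL}, close the two outer faces to obtain a PTT graph with an orbital toroidal REL, apply \cref{clm:torus:orbital}, and excise the added structure. However, the step you yourself flag as ``the main obstacle'' --- designing the connector so that the extended REL is orbital --- is precisely the nontrivial content of the proof, and your proposal does not resolve it. The difficulty is not really the PTT invariants (the paper's connector is minimal: two vertices $v_s,v_t$, or the existing loop vertices, with blue edges from $f_t$ into $v_t$ and from $v_s$ into $f_s$, \emph{two} parallel blue edges $(v_t,v_s)$ forming a $(1,0)$-cycle, and red loops at $v_s,v_t$). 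The real issue is that a cylindrical graph comes with no embedding on a flat torus, so ``1-orbital'' is not even well-defined until a meridian $\bl M$ is chosen; moreover, different blue cycles through the connector wind around the cylinder by different amounts, so they cannot all literally be parallel to any single meridian. Your claim that the connector can ``complete each blue edge of $G$ into a 1-orbital cycle'' therefore cannot be taken at face value. The paper resolves this by (i) observing that all red cycles are disjoint from the red loop at $v_t$, hence parallel to it, giving 2-orbitality; and (ii) taking the left-first and right-first blue cycles through $v_t$, showing $\hati(C_r,C_\ell)\geq 2$, invoking \cref{clm:combined} to build a blue cycle crossing both algebraically, \emph{declaring that cycle to be the meridian} $\bl M$, and then concluding 1-orbitality from the characterization in \cref{clm:orbitalCharacterization} rather than by exhibiting a 1-orbital cycle through every blue edge directly.

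A second, smaller point: you conjecture that the degree-condition in the definition of PTC graphs is what makes the connector work. That is a misdiagnosis --- the degree-condition is needed for the cylindrical REL itself to exist (it is forced by \cref{clm:cylinder:necessary} and consumed in \cref{clm:find}/\cref{clm:cylindricalREL}); the connector construction and the orbitality argument do not use it. Without the explicit meridian construction via \cref{clm:combined} and \cref{clm:orbitalCharacterization}, your argument has a genuine gap at its central step.
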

\begin{proof}
Let $\cL$ be a cylindrical REL of~$G$ computed with \cref{clm:cylindricalREL}.
We obtain a rectangular dual of~$G$ via reduction to toroidal graphs,
so we extend~$\cL$ to an orbital toroidal REL~$\hat{\cL}$.

Let $f_s$ and $f_t$ be the two outer faces.
There must be sources and sinks in the acyclic graph~$\blG$, 
and those must be on $f_s$ and $f_t$ since the REL property holds at inner vertices.   
Due to the REL property at outer vertices
(recall that we require that exactly one group of {\bl blue} edges is omitted) 
and since $f_s$ and $f_t$ are each bounded by a {\re red} cycle, 
either {\em all} vertices of $f_s$ are sources or {\em all} vertices of $f_s$ are sinks. 
Up to symmetry, assume that all vertices of~$f_s$ are sources and thus all vertices of $f_t$ are sinks of $\blG$.

For~$x \in \set{s,t}$, if~$f_x$ bounded by a loop, let~$v_x$ be that vertex.
Otherwise, we add new vertices~$v_t,v_s$, add for each vertex~$v$ of~$f_t$ a {\bl blue} edge~$\bl (v, v_t)$,
add two {\bl blue} edges~$\bl (v_t,v_s)$ as well as {\re red} loops at each of~$v_s$ and~$v_t$ (as non-contractible curves), and
add for each vertex~$w$ of~$f_s$ add a {\bl blue} edge~$\bl (v_s, w)$.
Finally, add a second {\bl blue} edge from one vertex~$v$ of~$f_t$ to~$v_t$
such that the two edges~$\bl (v, v_t)$ form an undirected cycle with homotopy class~$(1, 0)$;
similarly, add a second {\bl blue} edge from~$v_s$ to one vertex~$w$ of~$f_t$.
One verifies that the result is a PTT graph~$\hat{G}$; see also \cref{fig:cylinder:conversion}.
The assigned colors and directions complete~$\cL$ into a toroidal REL~$\hat{\cL}$ of~$\hat{G}$, 
since the REL property holds at~$v_s,v_t$ and all outer vertices of~$G$ due to the added edges.
REL~$\hat{\cL}$ is realizable since {\re red} edges lie on {\re red} cycles
and any {\bl blue} edge of~$G$ lies on a path of~$G$ from~$f_s$ to~$f_t$ (by the REL property) 
that can be completed into a blue cycle in~$\hat{G}$ via~$v_s$ and~$v_t$.
Define~$\re H$ to be the {\re red} loop at~$v_t$.   
Any {\re red} cycle~$C$ of~$\hat{\cL}$ is either~$\re H$, 
or it is completely disjoint from~$\re H$ since~$v_t$ has no other incident {\re red} edges.   
Therefore, all {\re red} cycles are parallel to~$\re H$, 
and with~$\re H$ being parallel to the horizon, we get that~$\hat{\cL}$ is {\re 2-orbital}.   

\begin{figure}[t]
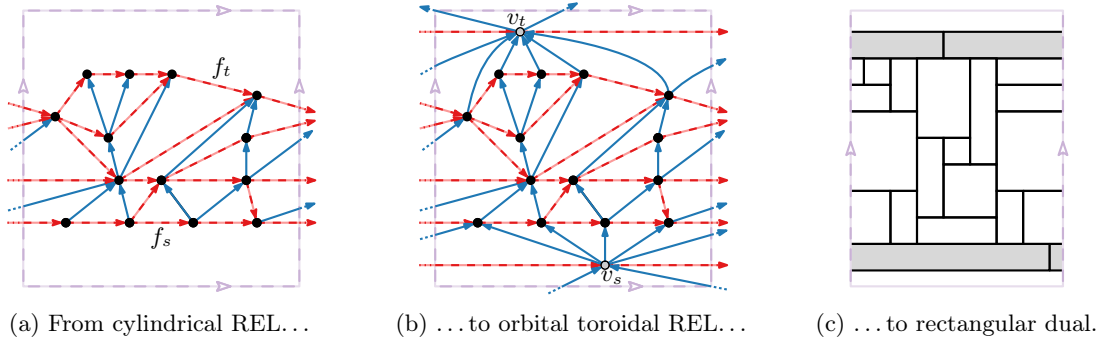

  \begin{minipage}[t]{0.32 \linewidth}
	\centering
  	\includegraphics[page=6]{cylinder}
	\subcaption{From cylindrical REL\ldots}
  \end{minipage}
  \hfill
  \begin{minipage}[t]{0.32 \linewidth}
	\centering
  	\includegraphics[page=7]{cylinder}
	\subcaption{\ldots to orbital toroidal REL\ldots}
  \end{minipage}
  \hfill
  \begin{minipage}[t]{0.27 \linewidth}
	\centering
  \includegraphics[page=8]{cylinder}
	\subcaption{\ldots to rectangular dual.}
  \label{fig:cylinder:recdual}
  \end{minipage}
  \caption{To construct a rectangular dual of a PTC graph with a cylindrical REL,
  we extend it to a PTT graph and toroidal REL 
  so we can use the algorithm for toroidal rectangular duals.}
  \label{fig:cylinder:conversion}
\end{figure}

On the flat cylinder we can ``twist'' $G$ around the cylinder without changing its embedding,
yet on the flat torus we now have to fix an embedding.
In fact, we claim that~$\hat{\cL}$ is also {\bl 1-orbital} for a suitable choice of meridian~$\bl M$. 
Consider the left-first and right-first cycles~$C_\ell$ and~$C_r$
in~$\re L_1(\hat{G})$ obtained by starting at~$v_t$.  
These cycles diverge at~$v_t$ (using the two copies of~$(v_t, v_s)$), 
and again diverge at~$v_s$, since~$v_s$ has at least two outgoing {\bl blue} edges.
Since~$C_\ell$ and~$C_r$ cannot cross left-to-right, $\hati(C_r, C_\ell) \geq 2$.   
Using \cref{clm:combined}, we can find a {\bl blue} cycle that crosses both~$C_\ell$ and~$C_r$ algebraically; declare~$\bl M$ to be this cycle.
Since~$C_\ell$ and~$C_r$ can only be crossed left-to-right and right-to-left, respectively, we have~$m(C_\ell) < 0 < m(C_r)$.
Since~$C_\ell$ and~$C_r$ cross~$\re H$ at~$v_t$ right-to-left, we have~$h(C_\ell), h(C_r) > 0$.
So by \cref{clm:orbitalCharacterization} $\hat{\cL}$ is {\bl 1-orbital} and thus orbital. 

By \cref{clm:torus:relToRecDual}, we can use the orbital toroidal REL~$\hat{\cL}$ 
to obtain a rectangular dual~$\hat{\R}$ of~$\hat{G}$ on a rectangular flat torus~$Q$.
After translation, if needed, the top of~$\hat{\R}(v_t)$ coincides with the top of~$Q$, 
meaning that the two edges~$(v_t,v_s)$ are realized exactly across the horizontal side of~$Q$.
If we now delete $\R(v_t)$ and $\R(v_s)$ (unless we used existing framing vertices)
and interpret~$Q$ as a flat cylinder rather than a flat torus, 
then we exactly obtain a cylindrical rectangular dual of $G$; see~\cref{fig:cylinder:recdual}.
\end{proof}

\section{Concluding Remarks} 
\label{sec:conclusion}
In this paper, we studied rectangular duals on the torus and on the cylinder.   
For toroidal graphs with a given REL, 
we characterized when this REL can correspond to a rectangular dual; 
if also given an embedding on the flat torus then we can test 
whether the corresponding flat torus can be made a rectangle.   
For cylindrical graphs, we need not be given a REL: 
we can characterize for any cylindrical graph whether it has a rectangular dual on the rectangular flat cylinder.   
The conditions can be tested, and if they hold, the corresponding rectangular duals can be found in linear time.

Our main open problem concerns toroidal graphs without REL and/or embedding on the flat torus.   
Does every properly-triangulated toroidal graph have a realizable REL?
Bonichon and L{\'{e}}v{\^{e}}que~\cite{BL19} show that it has a so-called `balanced 4-orientation'
from which a REL can be derived, but is this always realizable?
Secondly, given a REL~$\cL$, can we choose an embedding on $\bT$ that makes~$\cL$ orbital?
By exhaustive case analysis we can argue that this is not always possible for the graph in 
\cref{fig:otherrecduals}, not even if we are allowed to change the REL.
What are necessary and sufficient conditions, and can they be tested in linear time?

\pdfbookmark[1]{References}{References}
\bibliographystyle{plainurl}
\bibliography{sources}

\end{document}